\documentclass[11pt]{article}

\usepackage{amsthm,amsmath,amsfonts, amssymb}
\usepackage{mathtools}
\usepackage{fullpage}
\usepackage[hidelinks]{hyperref}

\newtheorem{lemma}{Lemma}
\newtheorem{theorem}{Theorem}
\newtheorem{corollary}{Corollary}
\newtheorem{definition}{Definition}

\newtheorem{conjecture}{Conjecture}

\newcommand{\F}{\mathbb{F}}

\newcommand{\col}{\mathrm{Col}}
\newcommand{\row}{\mathrm{Row}}

\newcommand{\eps}{\varepsilon}
\newcommand{\rk}{\mathsf{rk}}
\newcommand{\mrk}{\mathsf{minrk_{\F}}}
\newcommand{\E}{\mathbb{E}}
\newcommand{\mrktwo}{\mathsf{minrk_{2}}}

\newcommand{\cG}{\mathcal{G}}

\newcommand{\cM}{\mathcal{M}}

\newcommand{\cK}{\mathcal{K}}

\begin{document}
\begin{titlepage}
\title{\bf The Minrank of Random Graphs}
\author{
Alexander Golovnev\thanks{Courant Institute of Mathematical Sciences, New York
 University.}~\thanks{Supported by the Simons Collaboration on Algorithms and Geometry and by the National Science Foundation (NSF) under Grant No.~CCF-1320188. Any opinions, findings, and conclusions or recommendations expressed in this material are those of the authors and do not necessarily reflect the views of the NSF.}
\and
Oded Regev\footnotemark[1]~\footnotemark[2]
\and
Omri Weinstein\thanks{Columbia University}~\thanks{The work was done when the author was supported by a Simons Junior fellowship.}
}
\date{}
\maketitle
\thispagestyle{empty}

\begin{abstract}
The \emph{minrank} of a graph $G$ %
is the minimum rank of a matrix $M$ that can be obtained from the adjacency matrix of $G$ by
 switching some ones to zeros (i.e., deleting edges) and then setting all diagonal entries to one. 
This quantity is closely related to the fundamental information-theoretic problems of (linear) \emph{index coding} 
(Bar-Yossef et al., FOCS'06), %
network coding and distributed storage, and to Valiant's approach for proving superlinear circuit lower bounds (Valiant, Boolean Function Complexity '92). 

We prove tight bounds on the minrank of random Erd{\H o}s-R{\'e}nyi graphs $G(n,p)$ for all regimes of $p\in[0,1]$. In particular, for any constant $p$, we show that $\mathsf{minrk}(G) = \Theta(n/\log n)$
with high probability, where $G$ is chosen from $G(n,p)$. 
This bound gives a near quadratic improvement over the previous best lower bound of $\Omega(\sqrt{n})$ (Haviv and Langberg, ISIT'12), 
and partially settles an open problem raised by Lubetzky and Stav (FOCS '07). 
Our lower bound matches the well-known upper bound obtained by the ``clique covering" 
solution, and settles the linear index coding problem for random graphs. 

Finally, our result suggests a new avenue of attack, via derandomization, on  
Valiant's approach for proving superlinear lower bounds for logarithmic-depth semilinear circuits.
 
\end{abstract}
\end{titlepage}

\section{Introduction}

In information theory, the \emph{index coding} problem~\cite{BK98a,BBJK06} is the following: 
A sender wishes to \emph{broadcast} over a noiseless channel an $n$-symbol string $x\in \mathbb{F}^n$ to a group of $n$ receivers $R_1,\ldots,R_n$, 
each equipped with some \emph{side information}, namely, a subvector $x_{K_i}$ of $x$ indexed by a subset ${K_i} \subseteq \{x_1,\ldots ,x_n\}$.  
The index coding problem asks what is the minimum length $m$ of a broadcast message that allows each receiver $R_i$ to 
retrieve the $i$th symbol $x_i$, given his side-information $x_{K_i}$ and the broadcasted message.   
The side information of the receivers can be modeled by a directed graph $\cK_n$, 
in which $R_i$ observes %
the symbols $K_i := \{ x_j \; : \; (i,j) \in E(\cK_n) \}$.  $\cK_n$ is sometimes called the \emph{knowledge graph}. 
A canonical example is where $\cK_n$ is the complete graph (with no self-loops) on the vertex set $[n]$, i.e.,  
each receiver observes all but his own symbol. In this simple case, 
broadcasting the sum $\sum_{i=1}^n x_i$ (in $\mathbb{F}$) allows each receiver to retrieve his own symbol, hence $m= 1$. 

This problem is motivated by applications to distributed storage~\cite{AK15b},  
on-demand video streaming (ISCOD, \cite{BirkK06}) and wireless networks (see, e.g., \cite{YZ99a}), where a typical scenario is that 
clients miss information during transmissions of the network, and the network is interested in minimizing the retransmission length by exploiting the 
side information clients already possess.  
In theoretical computer science, index coding is related to some important %
communication models and problems in which players have overlapping information, such as 
the \emph{one-way} communication complexity of the index 
function~\cite{KNR01} and the more general problem of \emph{network coding}~\cite{Ashwede,ERL15}. 
Index coding can also be viewed as an interesting special case of nondeterministic computation in the 
(notoriously difficult to understand) multiparty \emph{Number-On-Forehead} model, which in turn is a promising approach for proving 
data structure and circuit lower bounds~\cite{Patrascu10, pudlak1997boolean, JuknaS11}.  
The minimum length of an index code for a given graph has well-known relations to other important graph parameters.
For instance, it is bounded from below by the size of the maximum independent set, and it is bounded from above 
by the clique-cover number ($\chi(\bar{G})$) since for every clique in $G$, it suffices to 
broadcast a single symbol (recall the example above). The aforementioned connections also led to algorithmic connections (via convex 
relaxations) between the computational complexity of graph coloring and that of computing the minimum index code length of 
a graph \cite{ChlamtacH14}.

In the context of circuit lower bounds, Riis \cite{R07} observed that a certain index coding problem is equivalent to the so-called 
\emph{shift conjecture} of Valiant \cite{V92} (see Subsection \ref{subsec_derendomization} below).
If true, %
this conjecture would resolve a major open problem of proving superlinear 
size lower bound for logarithmic-depth circuits. 

When the encoding function of the index code is \emph{linear} in $x$ (as in the example above), the corresponding scheme is called a \emph{linear index code}. 
In their seminal paper, Bar-Yossef et al. \cite{BBJK06} showed that the minimum length $m$ of a \emph{linear} index code is characterized 
precisely by a parameter of the knowledge graph $\cK_n$, called the \emph{minrank} ($\mrk(\cK_n)$), first introduced by 
Haemers \cite{Haemers79} in the context of Shannon capacity of graphs.%
\footnote{To be precise, this holds only for graphs without self-loops. We will ignore this minor issue in this paper as
it will not affect any of our results.}
Namely,  
$\mrk(\cK_n)$ is the minimum rank (over $\mathbb{F}$) of an $n\times n$ matrix $M$
that ``represents'' $\cK_n$. By ``represents'' we mean a matrix $M$ that 
contains a zero in all entries corresponding to \emph{non-edges}, and non-zero 
entries on the diagonal. Entries corresponding to edges are arbitrary. (Over $\mathbb{F}_2$ 
this is equivalent to being the adjacency matrix of a subgraph of $\cK_n$, with diagonal
entries set to one.)
Note that without the ``diagonal constraint", the above 
minimum would trivially be $0$, and indeed this constraint is what makes the problem interesting and hard to analyze. 
While  linear index codes are in fact optimal for a large class of knowledge graphs (including directed acyclic graphs, perfect graphs, odd ``holes" 
and odd ``anti-holes" %
\cite{BBJK06}), 
there are examples where non-linear codes outperform their linear counterparts~\cite{LU07}. 
In the same paper, Lubetzky and Stav \cite{LU07} posed the following question about  \emph{typical}  knowledge graphs, namely, 

\begin{quote}
\emph{What is the minimum length of an index code for a random knowledge graph 
$\cK_n=\cG_{n,p}$? } %
\end{quote}

\noindent Here, $\cG_{n,p}$ denotes a random Erd{\H o}s-R{\'e}nyi directed graph, 
i.e., a graph on $n$ vertices in which each arc is taken independently with probability $p$.
In this paper, we partially answer this open problem by determining the optimal length of \emph{linear} index codes for such graphs.
In other words, we prove a tight lower bound on the minrank of $\cG_{n,p}$ for all values of $p\in[0,1]$. In particular, 

\begin{theorem}[Main theorem, informal] \label{thm_main_informal}
For any constant $0<p<1$ and any field $\F$ of cardinality $|\F|<n^{O(1)}$, it holds with high probability that 
\[
\mrk(\cG_{n,p}) =  \Theta\left( \frac{n}{\log n} \right) \; . %
\] 
\end{theorem}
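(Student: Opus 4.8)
The plan is to prove the matching upper and lower bounds separately. The upper bound $\mrk(\cG_{n,p}) = O(n/\log n)$ is the standard ``clique covering'' solution: a partition of the vertex set into $t$ bidirectional cliques yields a representing matrix of rank $t$ (a block-diagonal matrix of all-ones blocks). Since each pair $\{i,j\}$ is a bidirectional edge of $\cG_{n,p}$ independently with probability $p^2$, the bidirectional structure is distributed as $G(n,p^2)$, whose clique-cover number is $\Theta(n/\log n)$ with high probability; equivalently, a short greedy/first-moment argument shows that every induced subgraph on $m$ vertices still contains a bidirectional clique of size $\Omega(\log m)$, so $O(n/\log n)$ cliques suffice. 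This part is routine.

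For the lower bound I would first record the linear-algebraic reformulation: over a field $\F$, one has $\mrk(G)\le r$ if and only if there exist vectors $a_1,\dots,a_n\in\F^r$ such that for every $j$, $a_j\notin \mathrm{span}\{a_i : i\neq j,\ (i,j)\notin E(G)\}$. (Writing a rank-$\le r$ representing matrix as $M=AB$ with rows $a_i$ of $A$ and columns $b_j$ of $B$: the existence of an admissible column $b_j$ -- orthogonal to every $a_i$ with $(i,j)$ a non-edge but not to $a_j$ -- is exactly this non-containment condition, since such a $b_j$ exists iff $a_j$ is not in the span of the relevant $a_i$'s.) The crucial feature of this formulation is that the constraint for distinct $j$'s involves disjoint sets of arcs (the arcs \emph{into} $j$). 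Hence, for a fixed tuple $(a_i)$ the events are independent and $\Pr_{\cG_{n,p}}[(a_i)\text{ valid}] = \prod_{j} \Pr[\,a_j\notin \mathrm{span}\{a_i : i\in S_j\}\,]$, where $S_j$ is the random in-non-neighborhood of $j$ (each $i\neq j$ lies in $S_j$ independently with probability $1-p$). Moreover the inner probability is at most $\sum_{H} p^{\#\{i\neq j : a_i\notin H\}}$, the sum over the $(q^r-1)/(q-1)\le q^r$ hyperplanes $H$ with $a_j\notin H$ (here $q=|\F|$), because the bad event forces all $a_i$ with $i\in S_j$ to lie in a common hyperplane avoiding $a_j$.

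Now I would union-bound over $(a_i)$, splitting on the number $t$ of distinct lines spanned by the $a_i$'s; note that two $a_i$'s on the same line force a bidirectional edge (since $\langle a_i,a_i\rangle\neq 0$), so the $t$ line-classes $V_1,\dots,V_t$ are bidirectional cliques covering $V(G)$. If $t$ is small, say $t\le c\,n/\log n$ for a suitable $c=c(p)$, I would union-bound over the at most $t^n$ partitions into $t$ cliques \emph{only} (not over the actual directions): a fixed partition forces an event of probability $p^{\sum_k|V_k|(|V_k|-1)}\le p^{\,n^2/t-n}$ by convexity, and $t^n\,p^{\,n^2/t-n}\to 0$ precisely when $t\log t = O(n)$, which holds throughout this range. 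If $t$ is large, I would use the per-vertex hyperplane bound: when many directions are present with bounded multiplicities, every hyperplane misses $\Omega(n)$ of the $a_i$'s, so each factor is at most $q^r\cdot p^{\Omega(n)}=2^{O(r\log q)}2^{-\Omega(n)}=2^{-\Omega(n)}$ (using $r\log q = O(\eps n)$), hence $\Pr_{\cG_{n,p}}[(a_i)\text{ valid}]\le 2^{-\Omega(n^2)}$ -- small enough to beat the crude count of $q^{rn}\le 2^{O(\eps n^2)}$ tuples. Taking $\eps$ small enough (depending on $p$ and the polynomial bound on $q$) closes both cases and yields $\mrk(\cG_{n,p})\ge \eps n/\log n$ with high probability, matching the upper bound.

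The main obstacle is exactly the large-$t$ regime when the numerous directions nevertheless lie in a low-dimensional subspace, or when a few lines carry most of the weight: then a random $S_j$ need not span $a_j$ and the per-vertex probability can be close to $1$, so neither the clique certificate nor the naive spanning estimate applies directly. I would resolve this by a peeling argument on the multiplicity profile and the dimension: if some line has multiplicity $\gg \log n$ it is a bidirectional clique contradicting the clique number of $\cG_{n,p}$; otherwise, passing to a hyperplane that contains all but a few of the $a_i$'s decreases the dimension while retaining almost all vertices, and one iterates. The process terminates either in a low-$t$ configuration -- which the clique-partition bound kills, now applied to an induced subgraph of linear size, contradicting that its bidirectional clique-cover number is $\Omega(n/\log n)$ -- or in a genuinely ``spread'' configuration where every hyperplane misses a linear number of the $a_i$'s and the spanning estimate of the previous paragraph applies. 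Making the quantitative thresholds line up simultaneously across all these regimes and uniformly over all fields of polynomial size is the delicate technical core of the argument.
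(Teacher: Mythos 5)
Your upper bound and your linear-algebraic reformulation are both fine (the clique-cover bound via bidirectional cliques is exactly how the paper matches its lower bound, and the characterization of $\mrk(G)\le r$ via vectors $a_1,\dots,a_n\in\F^r$ with $a_j\notin\mathrm{span}\{a_i : i\neq j,\ (i,j)\notin E\}$ is a correct and standard restatement; note though that ``same line forces a bidirectional edge'' follows from the span condition, not from $\langle a_i,a_i\rangle\neq0$, which can fail over finite fields). Your small-$t$ case (union bound over clique partitions only, paying $p^{n^2/t-n}$ against $t^n$) and your fully ``spread'' case (per-vertex factor $q^r p^{\Omega(n)}$ against the crude $q^{rn}$ count) are also sound. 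This is a genuinely different route from the paper, which never works with vector tuples at all: it encodes low-rank representing matrices by sparse row/column bases of a well-chosen principal submatrix and union-bounds over those succinct descriptions.

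The gap is the intermediate regime, and it is not a technicality. The statement ``when many directions are present with bounded multiplicities, every hyperplane misses $\Omega(n)$ of the $a_i$'s'' is false: put $n-n/\log n$ vectors on distinct lines inside one hyperplane $H_0$ and $n/\log n$ vectors outside it. There are many directions, all multiplicities are $1$, yet $H_0$ misses only $n/\log n$ vectors; for every $j$ with $a_j\notin H_0$ the per-vertex bound $q^r p^{O(n/\log n)}$ exceeds $1$, while each such vertex still costs $q^r=2^{\Theta(\eps n)}$ in your enumeration, so neither of your two cases applies. Your proposed repair, the peeling argument, is only a sketch: with up to $r=\eps n/\log n$ rounds each discarding ``a few'' vertices you have no control on the total number discarded (it can be $\omega(n)$); the union-bound cost of the peeling history (which hyperplane, which vertices leave at each round) is not accounted for; and you do not explain how to avoid paying $q^r$ per retained vertex whose validity probability is close to $1$. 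You correctly flag this as ``the delicate technical core,'' but that core is exactly what is missing -- and it is exactly the non-uniformity problem the paper's machinery was built to solve: Lemma~\ref{lem_k'_n'_n_k} extracts a principal submatrix of relative rank at most $k/n$ admitting row and column bases of sparsity $O(s\cdot k/n)$, Lemma~\ref{lem_basis_determines_matrix} and Corollary~\ref{cor_matrix_encoding} then bound the number of such submatrices by $(n|\F|)^{O(sk/n)}$, and Lemma~\ref{lem_rank_vs_sparsity_diagonal} ($s\ge n'^2/4k'$, from the nonzero diagonal) makes the probability term $p^{s-n'}$ win. Until your peeling step is made quantitative (or replaced by a device of comparable strength), the lower bound is established only in your two easy regimes, so the proposal as it stands does not prove the theorem.
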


The formal quantitative statement of our result can be found in Corollary \ref{cor_main_thm_const_p} below. We note that our general result 
(see Theorem \ref{thm_mrk_gnp}) extends beyond the constant regime to \emph{subconstant} values of $p$, and this feature of our lower bound 
is crucial for potential applications of our result to circuit lower bounds (we elaborate on this in the next subsection).  
Theorem \ref{thm_main_informal} gives a near quadratic improvement over the previously best lower bound of $\Omega(\sqrt{n})$~\cite{LU07,HL12}, 
 and settles the linear index coding problem for random knowledge graphs, as an $O_p(n/\log n)$ linear index coding 
scheme is achievable via the clique-covering solution (see Section \ref{sec_proof_overview}). 

In the following subsection, we propose a concrete (yet admittedly still quite challenging) approach for proving superlinear circuit lower bounds based on a 
potential ``derandomization" of Theorem \ref{thm_main_informal}. 

\subsection{Connections to circuit lower bounds for semilinear circuits} \label{subsec_derendomization}

\paragraph{General log-depth circuits.}
In his seminal line of work, Valiant \cite{V77,V83,V92} proposed a path for proving superlinear  lower bounds on the size of %
circuits with logarithmic depth, one of the main open questions in circuit complexity. 
Informally speaking, Valiant's ``depth reduction" method~\cite{V77,V09} allows, for any constant $\eps$, 
to reduce any circuit of size $O(n)$ and depth 
$O(\log n)$ (with $n$ inputs and $n$ outputs), to a new circuit %
with the same inputs and outputs, where now each output gate is an (arbitrary) Boolean function of $(i)$ at most $n^{\eps}$ 
inputs which are ``hard-wired" to this output gate, and $(ii)$ an additional fixed set of $m=O_{\eps}(n/\log\log n )$ 
``common bits" $b_1(x),\ldots, b_m(x)$ which in general may be arbitrary Boolean functions of the input $x=x_1,\ldots, x_n$. 
Therefore, if one could exhibit a function that cannot be computed in this model using $O(n/\log\log n)$ common bits, this would imply 
a superlinear circuit lower bound for logarithmic depth circuits. 

Valiant~\cite{V92} proposed a concrete candidate hard function for this new model,
namely the function whose input is an $n$-bit string $x$ and a number $i \in \{0,\ldots,n-1\}$
and whose output is the $i$th cyclic shift of $x$. 
Valiant conjectured that no ``pre-wired" circuit as above 
can realize \emph{all} $n$ cyclic shifts using $m=O(n/\log\log n)$ common bits (in fact, Valiant postulated that $m=\Omega(n)$ common bits are required, 
and this still seems plausible). This conjecture is sometimes referred to as \emph{Valiant's shift conjecture}.
As noted earlier in the introduction, Riis \cite{R07} observed that a certain index coding problem is equivalent to this conjecture. 
Let $G=(V,A)$ be a directed graph, and $i\in\{0,\ldots,n-1\}$. We denote by $G^i$ the graph with vertex set $V$ and arc set $A^i=\{(u,v+i (\bmod~n)): (u,v)\in A\}$.
Riis \cite{R07} showed that the following conjecture is equivalent to Valiant's shift conjecture:
\begin{conjecture}\label{conj_riis}
There exists $\eps>0$ such that for all sufficiently large $n$ and every graph $G$ on $n$ vertices with max-out-degree at most $n^{\eps}$, there exists a shift $i$ such that the minimum length of an index coding scheme for $G^i$ (over $\mathbb{F}_2$) is $\omega(n/\log\log n)$.
\end{conjecture}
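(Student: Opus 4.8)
The plan is to pursue the ``derandomization'' avenue flagged after Theorem~\ref{thm_main_informal}. A \emph{random} knowledge graph of the density forced by the out-degree bound has linear index coding length far above $n/\log\log n$, so it ought to suffice to show that \emph{every} graph $G$ with maximum out-degree at most $n^{\eps}$ has at least one cyclic shift $G^i$ that behaves, for the minrank lower bound, like such a random graph. Each $G^i$ has at most $n^{1+\eps}$ arcs, so the right comparison is with $\cG_{n,p}$ at density up to $n^{\eps-1}$; at those densities the clique-cover upper bound is of order $n$, and by the tightness in Theorem~\ref{thm_mrk_gnp} the minrank lower bound is also of order $n$, comfortably $\omega(n/\log\log n)$ --- whereas the constant-density bound $\Theta(n/\log n)$ is itself $o(n/\log\log n)$, so it is essential to work in the sparse regime. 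Since this paper controls only the \emph{linear} index coding length, I would first target the linear specialization of Conjecture~\ref{conj_riis}, namely a shift $i$ with $\mrktwo(G^i)=\omega(n/\log\log n)$, and separately flag the passage from linear to general index coding over $\F_2$ as further work, no clean graph parameter being known to capture the latter.

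\emph{Step 1: a transferable certificate of high minrank.} From the proof of Theorem~\ref{thm_mrk_gnp} I would extract an explicit combinatorial/pseudorandom property $\cP$ of an $n$-vertex directed graph $H$ such that $\cP(H)$ forces $\mrktwo(H)=\omega(n/\log\log n)$ in the sparse regime above, while $\cP$ holds with high probability for $\cG_{n,n^{\eps-1}}$. The point of extracting $\cP$, rather than invoking Theorem~\ref{thm_mrk_gnp} as a black box, is that $\cP$ should be phrased so that its \emph{failure} is certified by a rigid object: a representing matrix $M$ of $H$ of rank $\le r$, whose classes of equal rows are cliques of $H$ carrying all-nonzero rank-one blocks of $M$ (at most $2^r$ such classes, so $\cP$ must encode far more than this crude clique-cover consequence). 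Such a witness is highly structured yet not low in complexity --- of order $rn$ bits to describe --- which already flags why Step~2 is the hard one.

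\emph{Step 2: a pseudorandom shift exists.} Suppose $\cP(G^i)$ fails for every $i\in\mathbb{Z}_n$, so each $G^i$ carries a representing matrix of rank $\le r$ with $r=O(n/\log\log n)$. The goal is a contradiction with the sparsity of $G$. The only genuine structural handle is that $\{G^i\}_i$ is the $\mathbb{Z}_n$-orbit of a single graph with at most $n^{1+\eps}$ arcs, so the witnesses are not independent: a shift carries a witness for $G^i$ to one for $G^{i+1}$ on a translated support. I would try to package the $n$ witnesses into a single object over the group algebra $\F_2[\mathbb{Z}_n]$ --- a block-circulant matrix, or another algebraic encoding of ``every shift has a small witness'' constrained by the $\mathbb{Z}_n$-structure --- then show that the $n^{1+\eps}$-arc bound caps its rank or degree, contradicting the demand that all $n$ shifts be simultaneously structured. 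As a warm-up, simple gadgets are instructive: for a union of $n^{1-\eps}$ cliques of size $n^{\eps}$ laid out as consecutive intervals, a shift by one clique-width already turns the graph into a cyclic chain of near-complete bicliques whose block-diagonal structure alone forces minrank of order $n$, so a good shift is easy to find; the instances one must worry about are highly symmetric ones --- circulant $G$ being the natural case, where the conjecture reduces to exhibiting one translate $S+i$ of the connection set whose circulant graph has minrank $\omega(n/\log\log n)$, and there the easy lower bounds (independence number, clique cover) give only $n^{1-\eps}$.

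The main obstacle is Step~2, and it is a real one: by Riis's reduction quoted above, completing this plan would settle Valiant's shift conjecture. The difficulty is structural. The shift index $i$ carries only $\log n$ bits of entropy, so a naive union bound over the super-polynomially many candidate witnesses is hopeless; one must instead exploit the $\mathbb{Z}_n$-structure directly, and the correlations the out-degree bound injects --- out-neighborhoods that translate as blocks, and, worse, whatever low-rank structure a clever adversary can keep alive across many translations --- are exactly what such an argument has to defeat. I would expect real progress to require either a genuinely \emph{algebraic} handle on how $\mrktwo$ of a graph transforms under cyclic shift, strong enough to forbid all $n$ shifts of a sparse graph being simultaneously low-rank, or an \emph{explicit} pseudorandomness property provably robust under most shifts of an arbitrary sparse graph. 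Plausible intermediate milestones: the conjecture for bounded-degree circulant $G$; the linear version for \emph{most} shifts rather than one; or the conjecture with the bound $n^{\eps}$ relaxed.
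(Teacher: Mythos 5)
There is no proof to compare here, and your proposal does not supply one: the statement you were given is Conjecture~\ref{conj_riis}, which the paper does not prove. It is quoted from Riis precisely because, by his reduction, it is \emph{equivalent} to Valiant's shift conjecture, a major open problem; the paper only proves the random-graph analogue (Theorem~\ref{thm_mrk_gnp}) and explicitly offers ``derandomization'' of that theorem as a possible avenue, not as an established result. Your Step~2 --- showing that every graph of max-out-degree $n^{\eps}$ has some cyclic shift that inherits the pseudorandom property forcing high minrank --- is exactly that open derandomization step, and you yourself concede it is not carried out. So the proposal is a (reasonable, and candid) research program, not a proof; the entire mathematical content of the statement remains missing.

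Two further points to keep straight if you pursue this program. First, even if Steps~1--2 were completed, what you would obtain is a lower bound on $\mrktwo(G^i)$, i.e.\ on \emph{linear} index coding length, which is Conjecture~\ref{conj_riis_semilinear} (the semilinear circuit consequence), not Conjecture~\ref{conj_riis} as stated: the general index coding length is only upper-bounded by the minrank, so a minrank lower bound does not transfer to arbitrary (possibly nonlinear) schemes, and no technique in this paper addresses that gap --- you flag this, but it should be stated as a second open step, not ``further work'' in passing. Second, your density bookkeeping is right and matches the paper's own discussion (at $p=n^{\eps-1}$ Theorem~\ref{thm_mrk_gnp} gives $\Omega(n)$, whereas the constant-$p$ bound $\Theta(n/\log n)$ is indeed too weak), so the obstruction is not quantitative but structural: a union bound over witnesses is unavailable since the shift contributes only $\log n$ bits of randomness, and no handle on how $\mrktwo$ behaves under cyclic shifts of a fixed sparse graph is currently known --- circulant graphs, which you correctly isolate as the symmetric hard case, already exhibit the difficulty.
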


\paragraph{Semilinear log-depth circuits.}
Let us consider a function $f(x,p)$ whose input is partitioned into two parts, $x\in\{0,1\}^k$ and $p\in\{0,1\}^t$. 
We say that the function $f$ is \emph{semilinear} if for every fixed value of $p=p_0$, the function $f(x,p_0)$ is a linear function (over $\mathbb{F}_2$) of  $x$. 
The class of semilinear functions is quite rich, and includes for instance bilinear functions in $x$ and $p$ (such as matrix multiplication) and permutations $\pi_p(x)$ of $x$ that may depend arbitrarily on $p$.
A circuit $G$ is called \emph{semilinear} if for every fixed value of $p=p_0$, one can assign linear functions to the gates of $G$, so that $G$ computes $f(x,p_0)$.
So it is only the circuit's topology that is fixed, and the linear functions computed by the gates may depend arbitrarily on $p$. 

It is easy to see that a semilinear function with a one-bit output can always be computed by a linear-size log-depth semilinear circuit (namely, the full binary tree).
However, if we consider semilinear functions with $O(n)$ output bits, then the semilinear circuit complexity of a random function is $\Omega(n^2/\log{n})$ with high probability. It is an open problem to prove a superlinear lower bound against log-depth semilinear circuits~\cite{pudlak1997boolean}. 
This would follow from the semilinear variant of Valiant's shift conjecture,
which is equivalent to the following slight modification of Conjecture~\ref{conj_riis}~\cite{pudlak1997boolean,R07}.

\begin{conjecture}\label{conj_riis_semilinear}
There exists $\eps>0$ such that for all sufficiently large $n$ and every graph $G$ on $n$ vertices with max-out-degree at most $n^{\eps}$, there exists a shift $i$ such that the minimum length of a \emph{linear} index coding scheme for $G^i$ (over $\mathbb{F}_2$) is $\omega(n/\log\log n)$. Equivalently,
\[ \forall \; G \text{ of out-degrees at most $n^{\eps}$} \;\;  \exists \;\; i\in [n] \;\;  \mrktwo(G^i) = \omega(n/\log\log n) \; .\] 
\end{conjecture}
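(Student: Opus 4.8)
The plan is to obtain Conjecture~\ref{conj_riis_semilinear} as a \emph{derandomized} version of Theorem~\ref{thm_mrk_gnp}. The starting observation is that in the \emph{random} setting the statement is not merely true but wildly so: instantiating Theorem~\ref{thm_mrk_gnp} at the sparse density $p=n^{\eps-1}$ yields a graph $\cG_{n,p}$ whose maximum out-degree is at most $n^{\eps}$, yet whose minrank over $\F_2$ is $\Theta(n)$ with high probability — far above the threshold $\omega(n/\log\log n)$, and without even needing to take a shift. The entire content of the conjecture is therefore to replace ``random $G$'' by ``adversarial $G$ of out-degree $\le n^{\eps}$, together with a cleverly chosen shift $i$''. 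The adversary really does have power here: for $G$ a disjoint union of $\approx n^{1-\eps}$ cliques of size $n^{\eps}$ one has $\mrktwo(G)=O(n^{1-\eps})=o(n/\log\log n)$, so the operation $G\mapsto G^i$ is essential, and the hope (this is the content of Valiant's shift conjecture) is that it injects enough ``pseudorandomness'' that some $G^i$ inherits the behavior of a genuine random sparse graph.

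The most direct attempt is to re-run the lower-bound argument behind Theorem~\ref{thm_mrk_gnp} with the random graph $\cG_{n,p}$ replaced by $G^i$ for a uniformly random shift $i\in[n]$, with $G$ fixed but arbitrary of out-degree $\le n^{\eps}$. That argument has the shape of a union bound over the at most $2^{O(rn)}$ matrices over $\F_2$ of rank at most $r$: one shows that, for $r=O(n/\log\log n)$, a fixed such matrix $M$ fails to represent a random sparse graph except with probability $2^{-\Omega(rn)}$ (exploiting that the non-edge set of the graph is dense and ``spread out''), and then sums over all $M$. To port this, one would need, for each fixed low-rank $M$, a bound on $\Pr_{i\in[n]}[\,M\text{ represents }G^i\,]$ that uses only the uniform choice of $i$ together with the density of the non-edge set of $G$ (which is merely permuted, not destroyed, by shifting).

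I expect this last step to be the main obstacle — and I believe it is precisely why the statement is still only a conjecture. A naive union bound is hopeless: $i$ ranges over only $n$ values, so $\Pr_{i}[\,M\text{ represents }G^i\,]$ is either $0$ or at least $1/n$, which dwarfs the $2^{-\Omega(rn)}$ budget one needs to survive the union bound over $2^{\Omega(rn)}$ low-rank matrices. Any proof must instead exploit the structure of the cyclic action of $\mathbb{Z}_n$ on the coordinates. One avenue is Fourier-analytic: show that the family $\{G^i\}_{i\in[n]}$ is ``spread'' with respect to the (non-convex, rank-bounded) event of admitting a short linear index code, in the spirit of circulant-matrix / rigidity arguments. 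A dual avenue is a compression argument: if \emph{all} $n$ shifts $G^0,\dots,G^{n-1}$ had minrank at most $r$, one would obtain $n$ rank-$r$ matrices $M_0,\dots,M_{n-1}$ whose supports are cyclic shifts of subgraphs of $G$, and one would try to combine them into a single low-complexity object that reconstructs $G$ itself, contradicting the fact that a worst-case sparse $G$ (say, a bounded-degree expander) has no such succinct description. Since we only need \emph{one} good shift there is a lot of slack to exploit, but bridging it for \emph{every} bounded-degree $G$ is exactly the open problem; a natural intermediate milestone would be to first handle structured families of $G$ (e.g.\ Cayley graphs of $\mathbb{Z}_n$, where each $G^i$ has a transparent description), or to prove the weaker conclusion that the shifts cannot all have minrank $O(\sqrt{n})$, already beyond the current state of the art.
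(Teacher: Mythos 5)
The statement you were asked about is Conjecture~\ref{conj_riis_semilinear}: the paper does not prove it, but only observes (Section~1.1) that it would follow from a ``derandomization'' of Theorem~\ref{thm_mrk_gnp}, since at $p=n^{\eps-1}$ the random graph has minrank $\Omega(n)$ with high probability while a worst-case bounded-out-degree graph (e.g.\ a disjoint union of cliques) can have minrank $o(n/\log\log n)$, which is exactly the framing you give. Your proposal correctly declines to claim a proof, matches the paper's own discussion of why shifts are essential and why the union-bound argument of Theorem~\ref{thm_mrk_gnp} does not survive when the randomness is reduced to $n$ shifts, and its only slight imprecision is calling $n^{\eps}$ the \emph{maximum} out-degree of $\cG_{n,n^{\eps-1}}$, where the paper speaks of the expected degree; so there is nothing to correct beyond noting that the statement remains open.
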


Theorem~\ref{thm_main_informal} (and the more precise concentration bound we prove in Theorem \ref{thm_mrk_gnp}) 
asserts that %
with high probability, 
a graph chosen from $\cG_{n,p}$ 
(with $p=n^{\eps-1}$ for the 
expected degree of each vertex to be $n^{\eps}$)
has minrank $\Omega(n)$.  
Conjecture~\ref{conj_riis_semilinear} would follow from a ``derandomization'' of Theorem~\ref{thm_main_informal} 
in which we replace the distribution $\cG_{n,p}$ with a random shift of an arbitrary given graph of the right degree. 
In fact, for the purpose of circuit lower bounds, one could replace cyclic shifts with any (efficiently computable) set of at most $\exp(O(n))$ permutations.
(Since the permutation itself is part of the input, its description size must be linear in $n$.)

\paragraph{Semilinear series-parallel circuits.}
Finally, we mention one last circuit class for which the above ``derandomization" approach might be easier.
Here we replace the depth restriction by another restriction on the topology of the circuit. 
Namely, a circuit $G=(V,A)$ is called \emph{Valiant series-parallel (VSP)}, if there is a labeling of its vertices $l\colon V \to \mathbb{R}$, such that for every arc $(u,v)\in A$, $l(u)<l(v)$, but there is no pair of arcs $(u,v),(u',v')\in A$, such that $l(u)<l(u')<l(v)<l(v')$. Most of the known circuit constructions (i.e., circuit upper bounds) are VSP circuits. 
Thus, it is also a big open question in circuit complexity to prove a superlinear lower bound on the size of semilinear VSP circuits (of arbitrary depth). 

Valiant~\cite{V77}, Calabro~\cite{C08}, and Riis~\cite{R07} show that in order to prove a superlinear lower bound for semilinear VSP circuits, 
it suffices to show that for a sufficiently large \emph{constant} $d$, for every graph $G$ of max-out-degree at most $d$, the minrank of one of its shifts is at least $n/100$. We note that Theorem~\ref{thm_main_informal} for this regime of $p=d/n$ gives a lower bound of $n/20$. Thus, derandomization of the theorem in this regime would imply a superlinear lower bound. Note that in the case of $p=O(n^{-1})$, the entropy of a random graph is only $O(n\log{n})$ bits, hence, information-theoretically it seems easier to derandomize than the case of $p=n^{\eps-1}$.

\subsection{Proof overview of Theorem \ref{thm_main_informal}} \label{sec_proof_overview}
In \cite{LU07}, Lubetzky and Stav showed that for any field $\F$ and a directed graph $G$, 
$$\mrk{(G)}\cdot\mrk{(\bar{G})} \geq n \; .$$
This inequality gives a lower bound of $\Omega(\sqrt{n})$ on the expected value of the minrank of $\cG_{n,1/2}$.
(Indeed, the random variables $\cG_{n,1/2}$ and $\bar{\cG}_{n,1/2}$ have identical distributions).  
Since $\mrk(\cG_{n,p})$  is monotonically non-increasing in $p$, the same bound holds for any $p\leq 1/2$. 
Haviv and Langberg~\cite{HL12} improved this result by proving a lower bound of $\Omega(\sqrt{n})$ 
for all constant $p$ (and not just $p \le 1/2$), and also by showing that the bound holds with high probability.

We now outline the main ideas of our proof. For simplicity we assume that $\F=\F_2$ and $p=1/2$.
To prove that  $\mathsf{minrk}_2(\cG_{n,p}) \geq k $, we need to show that with high probability, $\cG_{n,p}$ has no representing matrix (in the sense of Definition \ref{def_mrk}) 
whose rank is less than $k$. 

As a first attempt, we can show that any \emph{fixed} matrix $M$ with $1$s on the diagonal of rank less than $k$ has very low probability of 
representing a random graph in $\cG_{n,p}$, and then apply a union bound over all such matrices $M$. 
Notice that this probability is simply $2^{-s+n}$, where $s$ is the sparsity of $M$ (i.e., the number of non-zero entries) and the $n$ is to account for the diagonal entries.
Moreover, we observe that the sparsity $s$ of any rank-$k$ matrix 
with $1$s on its main diagonal must be\footnote{To see why, notice that any maximal linearly independent 
set of columns must 
``cover'' all coordinates, i.e., there must not be any coordinate that is zero in all vectors, 
as otherwise we could take the column vector corresponding to that coordinate and it would be linearly independent 
of our set (due to the nonzero diagonal) in contradiction to maximality. 
Assuming all columns have roughly the same number of 1s, we  
obtain that each column has at least $n/k$ 1s, leading to the claimed bound. 
See Lemma~\ref{lem_rank_vs_sparsity_diagonal} for the full proof.}
at least $\approx n^2/k$.
Finally, since the number of $n\times n$ matrices of rank $k$ is $\approx 2^{2nk}$ (as a rank-$k$ matrix can be written as a product of $n\times k$ 
by $k\times n$ matrices, which requires $2nk$ bits to specify), 
by a union bound, the probability that 
$\cG_{n,p}$ contains a subgraph of rank $<k$ is bounded from above by  (roughly)  $2^{2nk}\cdot (1/2)^{n^2/k}$, which is $\ll 1$ for $k=O(\sqrt{n})$.
This recovers the previous 
$\Omega(\sqrt{n})$ lower bound of \cite{HL12} (for all constant $p$, albeit with a much weaker concentration bound). 

To see why this argument is ``stuck'' at $\sqrt{n}$, 
we observe that we are not overcounting and indeed, 
there are $2^{n^{3/2}}$ matrices 
of rank $k \approx n^{1/2}$ and sparsity $s \approx n^{3/2}$.
For instance, we can take the rank $n^{1/2}$ matrix that consists of $n^{1/2}$ diagonal $n^{1/2} \times n^{1/2}$ blocks of $1$s (a disjoint union of $n^{1/2}$ equal-sized cliques), and replace the first $n^{1/2}$ columns with arbitrary values. 
Each such matrix has probability $2^{-n^{3/2}}$ of representing $\cG_{n,p}$ (because of its sparsity) and 
there are $2^{n^{3/2}}$ of them, so the union bound breaks for $k=\Omega(\sqrt{n})$. 

In order to go beyond $\sqrt{n}$, we need two main ideas. 
To illustrate the first idea, notice that in the above example, even though individually each matrix has probability 
$2^{-n^{3/2}}$ of representing $\cG_{n,p}$, these ``bad events'' are highly correlated.
In particular, each of these events implies that $\cG_{n,p}$ must contain
$n^{1/2}-1$ disjoint cliques, an event that happens with roughly
the same probability $2^{-n^{3/2}}$. Therefore, we see that the probability that the \emph{union} of these bad events 
happens is only $2^{-n^{3/2}}$, greatly improving on the naive union bound argument. 
(We remark that this idea of ``bunching together related events'' is reminiscent of the chaining technique as used, e.g., in analyzing
Gaussian processes.)
More generally, the first idea (and also centerpiece) of our proof is Lemma~\ref{lem_k'_n'_n_k}, which shows 
that every matrix must %
contain a ``nice'' submatrix (in a sense to be defined below).
The second and final idea, described in the next paragraph, will be to bound the number of ``nice''
submatrices, from which the proof would follow by a union bound over all such submatrices. 

Before defining what we mean by ``nice'', we mention the following elementary yet crucial fact in our proof:   
Every rank $k$ matrix is uniquely determined by specifying some $k$ linearly independent rows, 
and some $k$ linearly independent columns (i.e., a row basis and a column basis) including the
indices of these rows and columns (see Lemma~\ref{lem_basis_determines_matrix}). 
This lemma implies that we can encode a matrix using only 
$\approx s_{basis}\cdot \log n$ bits, where $s_{basis}$ is the 
minimal sparsity of a pair of row and column bases that are guaranteed to exist.
This in turn implies that there are  only $\approx 2^{s_{basis} \log n}$ such matrices.
Now, since the average number of $1$s in a row or in a column of a matrix of sparsity $s$ 
is $s/n$, one might hope that such a matrix contains a pair of row and column bases 
of sparsity $k\cdot (s/n)$, and this is precisely our definition of a ``nice'' matrix.
(Obviously, not all matrices are nice, and as the previous example shows, there are lots of ``unbalanced'' matrices 
where the nonzero entries are all concentrated on a small number of columns, hence they have no sparse column 
basis even though the average sparsity of a column is very low; this is exactly why we need
to go to submatrices.)

To complete this overview, notice that using the bound on the number of ``nice''
matrices, the union bound yields
\[ 2^{ks\log(n)/n}\cdot (1/2)^{s}, \] 
so one could set the rank parameter $k$ to be as large as $\Theta(n/\log n)$ and the above expression would still be $\ll 1$. 
A similar bound holds for nice submatrices, completing the proof.

\section{Preliminaries}
For an integer $n$, we denote the set $\{1,\ldots,n\}$ by $[n]$. For an integer $n$ and $0\le p\le 1$, we denote  by 
$\cG_{n,p}$ the probability space over the directed graphs on $n$ vertices where each arc is taken independently with probability $p$.

For a directed graph $G$, we denote by $\chi(G)$ the chromatic number of the undirected graph that has the same set of vertices as $G$, and an edge in place of every arc of $G$. By $\bar{G}$ we mean a directed graph on the same set of vertices as $G$ that contains an arc if and only if $G$ does not contain it.\footnote{Throughout the paper we assume that graphs under consideration do not contain self-loops. In particular, neither $G$ nor $\bar{G}$ has self-loops.}

Let $\F$ be a finite field. For a vector $v\in \F^n$, we denote by $v^{j}$ the $j$th entry of $v$,
and by $v^{\leq j} \in \F^j$ the vector $v$ truncated to its first $j$ coordinates. For a matrix $M\in\F^{n\times n}$ and indices $i,j\in[n]$, let $M_{i,j}$ 
be the entry in the $i$th row and $j$th column of $M, \col_i(M)$ 
be the $i$th column of $M$, $\row_i(M)$ be the $i$th row of $M$, and $\rk(M)$ be the rank of $M$ over $\F$. 

By a \emph{principal submatrix} we mean a submatrix whose set of row indices is the same as the set of column indices.
By the \emph{leading principal submatrix} of size $k$ we mean a principal submatrix that contains the first $k$ columns and rows.

For a matrix $M\in\F^{n \times n}$, the sparsity $s(M)$ is the number of non-zero entries in $M$. 
We say that a matrix $M\in \F^{n\times n}$ of rank $k$ \emph{contains} an \emph{$s$-sparse column (row) basis}, if $M$ \emph{contains} a column (row) basis 
(i.e., a set of $k$ linearly independent columns (rows)) with a total of at most $s$ non-zero entries. 

\begin{definition}[Minrank~\cite{BBJK06, LU07}]\footnote{In this paper we consider the directed version of minrank. Since the minrank of a directed graph does not exceed the minrank of its undirected counterpart, a lower bound for a directed random graph implies the same lower bound for an undirected random graph. The bound is tight for both directed and undirected random graphs (see Theorem~\ref{thm:tight}). 
}\label{def_mrk}
Let $G=(V,A)$ be a graph on $n=|V|$ vertices with the set of directed arcs $A$. A matrix $M \in \F^{n\times n}$ \emph{represents} 
$G$ if $M_{i,i}\neq0$ for every $i\in[n]$, and $M_{i,j}=0$ whenever $(i,j)\notin A$ and $i\neq j$. The minrank of $G$ over $\F$ is 
\[
\mrk(G)= \min_{M \text{ represents }G}  \rk(M) \; .
\]
\end{definition}

\

We say that two graphs \emph{differ at only one vertex} if they differ only in arcs leaving one vertex. Following~\cite{HHMS10,HL12}, to amplify the probability in Theorem~\ref{thm_mrk_gnp}, we shall use the following form of 
Azuma's inequality for the vertex exposure martingale.

\begin{lemma}[Corollary 7.2.2 and Theorem 7.2.3 in~\cite{AS15}]\label{thm:azuma}
Let $f(\cdot )$ be a function that maps directed graphs to $\mathbb{R}$.
If $f$ satisfies the inequality $|f(H)-f(H')|\le1$ whenever the graphs $H$ and $H'$ differ at only one vertex, 
then 
\[
\Pr[\left|f(\cG_{n,p})-\E[f(\cG_{n,p})]\right|>\lambda\sqrt{n-1}]<2e^{-\lambda^2/2} \; .
\]
\end{lemma}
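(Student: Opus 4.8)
The plan is to obtain the bound by applying the Azuma--Hoeffding inequality for bounded-difference martingales to the \emph{vertex exposure martingale} associated with $f$; both ingredients are standard (this is essentially the combination worked out in~\cite{AS15}), so I only outline how they fit together and where care is needed.

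First I would fix an arbitrary ordering $v_1,\dots,v_n$ of the vertices and, for each $i$, let $\cF_i$ be the information obtained by exposing the first $i$ vertices, namely all arcs \emph{leaving} $v_1,\dots,v_i$ (that is, the out-neighborhoods of the first $i$ vertices). I would set $X_i:=\E[f(\cG_{n,p})\mid\cF_i]$, with $X_0:=\E[f(\cG_{n,p})]$ the unconditional mean and $X_n=f(\cG_{n,p})$ the fully exposed value. The sequence $(X_i)_{i=0}^n$ is then a martingale by the tower property of conditional expectation. The reason for exposing the arcs one \emph{source vertex} at a time --- rather than one arc at a time, or one induced sub-digraph at a time --- is that the only randomness separating $\cF_{i-1}$ from $\cF_i$ is the set of arcs leaving the single vertex $v_i$, which is exactly the kind of perturbation the hypothesis on $f$ controls.

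Next I would verify the bounded-difference condition $|X_i-X_{i-1}|\le 1$. Fix a realization of $\cF_{i-1}$ and two possible values $a,a'$ of the out-neighborhood of $v_i$; coupling the remaining (mutually independent) arcs identically in the two cases, the two resulting graphs differ only in arcs leaving the one vertex $v_i$, so by hypothesis the corresponding conditional expectations of $f$ differ by at most $1$. Since $X_{i-1}$ is a weighted average over $a$ of these conditional expectations while $X_i$ equals one of them, $|X_i-X_{i-1}|\le 1$ follows. Azuma's inequality then yields $\Pr[|X_n-X_0|>\lambda\sqrt{m}]<2e^{-\lambda^2/2}$, where $m$ is the number of nontrivial increments, and this gives the claim (the square-root factor being $\sqrt{n-1}$ rather than $\sqrt{n}$ is the usual small saving from organizing the martingale so that the first exposure step contributes nothing, as in~\cite{AS15}). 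I expect the main point requiring care to be precisely this matching between the Lipschitz hypothesis --- phrased in terms of arcs leaving one vertex --- and the martingale increments: one must ensure each exposure step perturbs the arcs of only a single source vertex, which is why out-neighborhood exposure (rather than full-neighborhood exposure) is the right order; once that is arranged, the remaining estimates are routine.
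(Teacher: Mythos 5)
Your proposal follows exactly the route the paper intends: the paper offers no proof of this lemma and simply cites it from~\cite{AS15}, and your out-neighborhood exposure martingale plus the coupling verification of the unit bounded-difference property plus Azuma is precisely that standard argument, correctly adapted to the directed setting (exposing arcs leaving one source vertex per step is indeed what matches the paper's Lipschitz hypothesis). One small inaccuracy: your parenthetical explanation of the factor $\sqrt{n-1}$ does not apply here. In the undirected vertex-exposure martingale of~\cite{AS15} the first step is trivial because a single exposed vertex carries no edges; with out-neighborhood exposure of a directed graph the first step already reveals all arcs leaving $v_1$, so all $n$ increments are nontrivial and plain Azuma gives $\Pr[|f-\E f|>\lambda\sqrt{n}]<2e^{-\lambda^2/2}$, i.e.\ only $2e^{-\lambda^2(n-1)/(2n)}$ at deviation $\lambda\sqrt{n-1}$ --- marginally weaker than the stated constant. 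This is immaterial for the paper's application (the bound is only used up to constants in the exponent), and the literal statement is recovered by invoking the sharper bounded-differences (McDiarmid) form, which with $n$ unit-Lipschitz coordinates gives $2e^{-2\lambda^2(n-1)/n}\le 2e^{-\lambda^2}$ for $n\ge 2$; but as written the ``first step contributes nothing'' claim should be dropped or replaced by one of these fixes.
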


\section{The Minrank of a Random Graph} \label{sec_minkank_gnp}

The following elementary linear-algebraic lemma shows that a matrix  $M \in \F^{n\times n}$ of rank $k$ is fully specified by $k$ linearly 
independent rows, $k$ linearly independent columns, and their $2k$ indices. In what follows, 
we denote by $\cM_{n,k}$ the set of matrices from $ \F^{n\times n}$ of rank $k$. 
\begin{lemma}[Row and column bases encode the entire matrix] \label{lem_basis_determines_matrix}
Let $M \in \cM_{n,k}$, and let $R = (\row_{i_1}(M),\ldots , \row_{i_k}(M)), C = (\col_{j_1}(M),\ldots , \col_{j_k}(M))$ be, 
respectively, a row basis and a column basis of $M$.
Then the mapping 
$\phi\colon \cM_{n,k} \to (\F^{1\times n})^k \times (\F^{n\times 1})^k \times [n]^{2k}$
defined as 
\[
\phi(M)=(R,C,i_1,\ldots,i_k,j_1,\ldots,j_k) \; ,
\]
 is a one-to-one mapping. 
\end{lemma}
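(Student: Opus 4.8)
The plan is to show $\phi$ is injective by exhibiting an explicit left inverse: from the data $(R, C, i_1,\ldots,i_k, j_1,\ldots,j_k)$ I will reconstruct $M$ uniquely. First observe that since $R$ is a row basis of $M$, every row $\row_i(M)$ is a unique linear combination $\row_i(M) = \sum_{t=1}^k \alpha^{(i)}_t \row_{i_t}(M)$; the coefficient vector $\alpha^{(i)} = (\alpha^{(i)}_1,\ldots,\alpha^{(i)}_k)$ is uniquely determined because $\row_{i_1}(M),\ldots,\row_{i_k}(M)$ are linearly independent. So $M$ is completely determined once we know $R$ together with the coefficient vectors $\alpha^{(i)}$ for all $i \in [n]$. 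The crux is therefore to recover these coefficient vectors from the given data, and this is where the column basis $C$ and the column indices $j_1,\ldots,j_k$ come in.

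The key step: restrict attention to the $k$ columns indexed by $j_1,\ldots,j_k$. Let $\tilde M \in \F^{n\times k}$ be the submatrix of $M$ formed by these columns (its $r$-th column is $\col_{j_r}(M)$, which is exactly the $r$-th entry of the tuple $C$, so $\tilde M$ is known from the data). Let $\tilde R \in \F^{k \times k}$ be the further restriction of $\tilde M$ to the rows $i_1,\ldots,i_k$; equivalently $\tilde R$ is obtained from $R$ by keeping only the coordinates $j_1,\ldots,j_k$, so $\tilde R$ is also known. The relation $\row_i(M) = \sum_t \alpha^{(i)}_t \row_{i_t}(M)$ restricted to the coordinates $j_1,\ldots,j_k$ says precisely that $\row_i(\tilde M) = \sum_t \alpha^{(i)}_t \row_{i_t}(\tilde M) = \alpha^{(i)} \tilde R$, i.e. $\tilde M = A \tilde R$ where $A \in \F^{n \times k}$ has rows $\alpha^{(i)}$. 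Now I claim $\tilde R$ is invertible: $\tilde R$ is the $k\times k$ matrix sitting at the intersection of a row basis and a column basis of the rank-$k$ matrix $M$, and a standard fact (the rows of $C$ restricted to $\{i_1,\dots,i_k\}$ form a basis of the column space of the $k\times n$ matrix with rows $R$, since that matrix has rank $k$ and the $j$-th columns span its column space) shows this submatrix has rank $k$. Hence $A = \tilde M \tilde R^{-1}$ is uniquely determined, which recovers all coefficient vectors $\alpha^{(i)}$, and then $M = A \cdot (\text{matrix with rows } R)$ is uniquely recovered. This shows $\phi$ is one-to-one.

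The only real obstacle is justifying that the "corner" matrix $\tilde R$ is invertible — that picking a row basis and a column basis, their common $k\times k$ intersection is nonsingular. I would prove this cleanly as follows: write $M = UV$ with $U \in \F^{n\times k}$ of full column rank and $V \in \F^{k\times n}$ of full row rank (rank factorization). Then the rows $\row_{i_1}(M),\ldots,\row_{i_k}(M)$ being linearly independent is equivalent to the rows of $U$ indexed by $i_1,\ldots,i_k$ forming an invertible $k\times k$ matrix $U'$, and similarly the column basis condition makes the columns of $V$ indexed by $j_1,\ldots,j_k$ form an invertible $k\times k$ matrix $V'$. But $\tilde R = U' V'$, a product of two invertible matrices, hence invertible. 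Everything else is routine linear algebra, so I expect the write-up to be short; the subtlety is purely in packaging this rank-factorization argument, and I'd state it as a small auxiliary observation before defining the inverse of $\phi$.
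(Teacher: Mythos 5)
Your proposal is correct and follows essentially the same route as the paper: both proofs hinge on showing that the $k\times k$ submatrix at the intersection of the chosen row and column bases is nonsingular, and then recover the remaining entries by uniquely solving for the coefficients of the non-basis rows (the paper does the mirror-image version with columns). The only real difference is cosmetic: you certify nonsingularity of the corner via a rank factorization $M=UV$, while the paper argues by contradiction that a rank-deficient corner would force the $k$ basis rows to be dependent; both arguments are standard and equally valid.
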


\begin{proof}
We first claim that the intersection of $R$ and $C$ has full rank, i.e., that the submatrix $M'\in\F^{k\times k}$ obtained by taking rows $i_1,\ldots,i_k$ and columns $j_1,\ldots,j_k$
has rank $k$. This is a standard fact, see, e.g.,~\cite[p20, Section 0.7.6]{HornJohnson}. We include a proof for completeness. 
Assume for convenience that $(i_1,\ldots,i_k)=(1,\ldots, k)$ %
and $(j_1,\ldots,j_k)=(1,\ldots, k)$. %
Next, assume towards contradiction that $\rk(M')=\rk(\{\col_1(M'),\ldots , \col_{k}(M')\}) = k' < k$.
Since $C$ is a column basis of $M$, every column $\col_i(M)$ is a linear combination of vectors 
from $C$, and in particular, every $\col_i(M')$ is a linear combination of $\{\col_1(M'),\ldots , \col_{k}(M')\}$. 
Therefore, the $k\times n$ submatrix $M'' \coloneqq (\col^{\leq k}_1(M),\ldots , \col^{\leq k}_{n}(M))$ has rank $k'$. On the other hand, the $k$ rows of $M''\colon \row_1(M),\ldots , \row_k(M)$ were chosen to be linearly independent by construction. Thus, $\rk(M'')=k>k'$, which leads to a contradiction. 

In order to show that $\phi$ is one-to-one, we show that $R$ and $C$ (together with their indices) uniquely determine the remaining entries of $M$. 
We again assume for convenience that $(i_1,\ldots,i_k)=(1,\ldots, k)$ %
and $(j_1,\ldots,j_k)=(1,\ldots, k)$.
Consider any column vector $\col_i(M)$, $i\in [n]\setminus [k]$. By definition, 
$\col_i(M) = \sum_{t=1}^k \alpha_{i,t} \cdot  \col_t(M)$ for some coefficient vector 
$\alpha_i\coloneqq (\alpha_{i,1},\ldots , \alpha_{i,k})\in\F^{k\times1}$. Thus,
in order to completely specify all the entries of $\col_i(M)$, it suffices to determine the coefficient vector $\alpha_i$. But $M'$ has full rank, 
hence the equation $$ M' \alpha_i^T = \col^{\leq k}_i(M) $$ 
has a \emph{unique} solution. Therefore, the coefficient vector $\alpha_i$ is fully determined by $M'$ and $\col^{\leq k}_i(M)$.
Thus, the matrix $M$ can be uniquely recovered from $R, C$ and the indices $\{i_1,\ldots, i_k\}, \{j_1,\ldots,j_k\}$.
\end{proof}

The following corollary gives us an upper bound on the number of low-rank matrices that contain  sparse column and row bases.
In what follows, we denote by $\cM_{n,k,s}$ the set of matrices over $\F^{n \times n}$ of rank $k$ that contain an $s$-sparse 
row basis and an $s$-sparse column basis.
\begin{corollary}[Efficient encoding of sparse-base matrices]\label{cor_matrix_encoding}
 \[ \left| \cM_{n,k,s} \right| \leq (n\cdot|\F|)^{6s} \; . \]
\end{corollary}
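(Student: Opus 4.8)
The plan is to bound $|\cM_{n,k,s}|$ by exhibiting an explicit injective encoding of each such matrix into a small set of bit strings, then counting the possible encodings. The key tool is Lemma~\ref{lem_basis_determines_matrix}, which says that a rank-$k$ matrix $M$ is completely determined by a row basis $R$, a column basis $C$, and the $2k$ indices recording where these rows and columns sit. So to specify a matrix $M \in \cM_{n,k,s}$, it suffices to: (i) record the $2k$ indices, which costs at most $n^{2k}$ choices; and (ii) record an $s$-sparse row basis and an $s$-sparse column basis, which exist by the definition of $\cM_{n,k,s}$.

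The main point is therefore to count the number of ways to describe an $s$-sparse collection of $k$ vectors in $\F^{1\times n}$ (and likewise for columns). Such a collection has at most $s$ nonzero entries in total across all $k$ vectors. I would encode it by listing, for each nonzero entry, the triple consisting of (which of the $k$ vectors it belongs to, which of the $n$ coordinates it occupies, which of the $|\F|$ values it takes); a nonzero entry has at most $|\F|$ possible values, and there are at most $k \le n$ choices for the vector index and $n$ for the coordinate. Since there are at most $s$ nonzero entries, the number of such sparse collections is at most $(n \cdot n \cdot |\F|)^{s} \le (n \cdot |\F|)^{2s}$, being a bit generous; one can also pad the list to exactly $s$ triples harmlessly. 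Doing this for both the row basis and the column basis gives a factor $(n|\F|)^{4s}$, and multiplying by the $n^{2k} \le (n|\F|)^{2k} \le (n|\F|)^{2s}$ factor for the indices (using $k \le s$, which holds since each of the $k$ basis rows must be nonzero, so $s \ge k$) yields the bound $|\cM_{n,k,s}| \le (n|\F|)^{6s}$.

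Concretely, the steps in order are: first invoke Lemma~\ref{lem_basis_determines_matrix} to reduce specifying $M$ to specifying $(R, C, i_1,\dots,i_k, j_1,\dots,j_k)$; second, bound the number of index tuples by $n^{2k}$; third, bound the number of $s$-sparse $k$-tuples of row vectors by $(n|\F|)^{2s}$ via the triple-listing encoding described above, and identically for columns; fourth, use $k \le s$ to absorb the index factor; fifth, multiply the three bounds and simplify to $(n|\F|)^{6s}$. Since $\phi$ from Lemma~\ref{lem_basis_determines_matrix} is injective, distinct matrices give distinct encodings, so the count of encodings is a valid upper bound on $|\cM_{n,k,s}|$.

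I do not anticipate a serious obstacle here — this is a clean counting argument — but the one place requiring a little care is making sure the encoding of a sparse basis is genuinely well-defined and injective: one must fix a canonical order in which the (vector-index, coordinate, value) triples are listed (say lexicographically) and handle the case of fewer than $s$ nonzero entries (pad with dummy triples, or allow the list length to vary, which only helps). One should also double-check the trivial-but-necessary inequality $s \ge k$, which is what licenses replacing $n^{2k}$ by $(n|\F|)^{2s}$ and keeps the final exponent at $6s$ rather than something like $4s + 2k$; this holds because a basis of $k$ linearly independent vectors cannot contain the zero vector, so it has at least $k$ nonzero entries.
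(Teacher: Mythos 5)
Your proposal is correct and follows essentially the same route as the paper: invoke the injectivity of $\phi$ from Lemma~\ref{lem_basis_determines_matrix}, count $s$-sparse bases by specifying positions and values of nonzero entries (the paper writes this as $\binom{kn}{s}\cdot|\F|^{s} \leq (n^{2}|\F|)^{s}$ per basis, which matches your triple-listing bound), and absorb the $n^{2k}$ index factor using $s\geq k$. The only cosmetic difference is that the paper dispenses with the $s<k$ case up front by noting $|\cM_{n,k,s}|=0$, exactly as you justify at the end.
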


\begin{proof}
Throughout the proof, we assume without loss of generality that $s \geq k$, as otherwise 
$\left| \cM_{n,k,s} \right|=0$ hence the inequality trivially holds. 
The function $\phi$ from Lemma~\ref{lem_basis_determines_matrix} maps matrices from $\cM_{n,k,s}$ to $(R,C,i_1,\ldots, i_k, j_1,\ldots, j_k)$, where $R$ and $C$ are $s$-sparse bases. Therefore, the total number of matrices in $\cM_{n,k,s}$ is bounded from above by 
\[
\left({\binom{kn}{s}} \cdot |\F|^{s}\right)^2 \cdot n^{2k} \leq \left( (n^2)^{s} \cdot |\F|^{s} \right)^2 \cdot n^{2k} \leq (n\cdot|\F|)^{6s} \; ,
\]
where the last inequality follows from $k\leq s$.
\end{proof}

Now we show that a matrix of low rank with nonzero entries on the main diagonal must contain many nonzero entries. To get some intuition on this, notice that a rank $1$ matrix 
with nonzero entries on the diagonal must be nonzero everywhere. Also notice that the assumption
on the diagonal is crucial -- low rank matrices in general can be very sparse. 

\begin{lemma}[Sparsity vs.\ Rank for matrices with non-zero diagonal] \label{lem_rank_vs_sparsity_diagonal}
For any matrix $M \in \F^{n\times n}$ with non-zero entries on the main diagonal (i.e., $M_{i,i}\neq0$ for all $i\in[n]$), it holds that
\[   s(M) \geq \frac{n^2}{4 \rk(M)} \; .  \]
\end{lemma}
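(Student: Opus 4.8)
The plan is to argue by contradiction. Set $k := \rk(M)$ (note $k\ge 1$ since $M$ has a nonzero diagonal) and suppose that $s(M) < n^2/(4k)$; I will then exhibit a diagonal entry equal to zero. The driving observation — the same one used in the footnote — is a \emph{covering} property: if $C$ is any set of columns of $M$ spanning the column space, then no row index $i$ can have $M_{i,j}=0$ for all $\col_j(M)\in C$, since otherwise $\col_i(M)$, whose $i$th entry is $M_{i,i}\neq 0$, could not lie in $\mathrm{span}(C)$. The difficulty is that applying this to a single column basis only forces $s(M)\ge n$, which is far from the quadratic bound; the point of the proof is to run the covering argument \emph{inside a large sub-collection of sparse columns}.

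First I would isolate the light columns. Call a column \emph{heavy} if it has at least $n/(2k)$ nonzero entries. Then the number of heavy columns is at most $s(M)\cdot\frac{2k}{n} < \frac{n^2}{4k}\cdot\frac{2k}{n} = \frac n2$, so the set $S$ of \emph{light} columns (each with fewer than $n/(2k)$ nonzero entries) satisfies $|S| > n/2$. Next, restrict $M$ to the columns indexed by $S$ and choose a column basis of this submatrix; it consists of $r \le \rk(M) = k$ columns, each light, hence containing together fewer than $k\cdot\frac{n}{2k} = \frac n2$ nonzero entries. Consequently the set $Z$ of row indices $i$ such that $M_{i,j}=0$ for every basis column $\col_j$ has $|Z| > n/2$. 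Since every column in $S$ is a linear combination of these basis columns, it follows that for every $i\in Z$ and every $j\in S$ we in fact have $M_{i,j}=0$.

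Finally, because $|S| > n/2$ and $|Z| > n/2$, there exists an index $i\in S\cap Z$. Then $i\in Z$ and $i\in S$ give $M_{i,i}=0$, contradicting the hypothesis that the diagonal of $M$ is nonzero. Hence $s(M)\ge n^2/(4k)$, as claimed. The main obstacle — and the only idea beyond the naive covering bound — is the two-step structure: pass to the large sub-collection $S$ of light columns, use the rank/covering argument \emph{within} $S$ to produce a large set $Z$ of rows annihilated by all of $S$, and then exploit the pigeonhole overlap $S\cap Z\neq\emptyset$ against the diagonal. The integrality of the counting bounds only affects the constant $4$ and needs no care.
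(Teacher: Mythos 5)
Your proof is correct and is essentially the paper's argument: both isolate the more-than-$n/2$ light columns by a Markov/counting bound and then use the covering observation that a sparse spanning set of those columns must miss some light coordinate, which the nonzero diagonal forbids. The only difference is presentational — you argue by contradiction with the lightness threshold $n/(2k)$, while the paper argues directly with threshold $2s/n$ to lower-bound the size of a maximal independent set of light columns.
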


\begin{proof}
Let $s$ denote $s(M)$. 
The average number of nonzero entries in a column of $M$ is $s/n$. Therefore, Markov's inequality implies that there are at least $n/2$ columns in $M$
\emph{each of which} has sparsity at most $2s/n$. 
Assume without loss of generality that these are the first $n/2$ columns of $M$.
Now pick a maximal set of linearly independent columns among these columns. 
We claim that the cardinality of this set is at least $n^2/(4s)$. 
Indeed, in any set of less than $n^2/(4s)$ columns, the number of coordinates that are nonzero in \emph{any} of the columns is less than
\[
    \frac{n^2}{4s} \cdot \frac{2s}{n} = \frac{n}{2}
\]
and therefore there exists a coordinate $i \in \{1,\ldots,n/2\}$ that is zero
in all those columns. As a result, the $i$th column, which by assumption has a nonzero $i$th coordinate, must be linearly independent of
all those columns, in contradiction to the maximality of the set. We therefore get that
\[
  \rk(M) \geq n^2/(4s) \; ,
\]
as desired.
\end{proof}

The last lemma we need is also the least trivial. 
In order to use Corollary~\ref{cor_matrix_encoding}, we would like
to show that any $n \times n$ matrix of rank $k$ has sparse row and column bases,
where by sparse we mean that their sparsity is roughly $k/n$ times that of the entire matrix.
If the number of nonzero entries in each row and column was roughly the same,
then this would be trivial, as we can take any maximal set of linearly independent
columns or rows. 
However, in general, this might be impossible to achieve. E.g., consider 
the $n \times n$ matrix whose first $k$ columns are chosen uniformly and 
the remaining $n-k$ columns are all zero. Then any column basis would 
have to contain all first $k$ columns (since they are linearly 
independent with high probability) and hence its sparsity is equal
to that of the entire matrix. 
Instead, what the lemma shows is that one can always choose a 
\emph{principal submatrix} with the desired property, i.e., that 
it contains sparse row and column bases, while at the same time having
relative rank that is at most that of the original matrix.

\begin{lemma}[Every matrix contains a principal submatrix of low relative-rank and sparse bases] \label{lem_k'_n'_n_k}
Let $M\in \cM_{n,k}$ be a matrix. %
There exists a principal submatrix $M'\in \cM_{n',k'}$ of $M$, such that 
$k'/n' \leq k/n$, and $M'$ contains a column basis and a row basis of sparsity at most 
\[
s(M')\cdot\frac{2k'}{n'}\; .  %
\]
\end{lemma}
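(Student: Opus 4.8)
The plan is to prove this by an iterative ``peeling'' argument: repeatedly discard the heaviest rows/columns (together with their matching indices, to stay within the class of principal submatrices) until the remaining matrix is sufficiently balanced that a maximal linearly independent set of columns (resp.\ rows) is automatically sparse. Concretely, starting from $M_0 = M$ with $(n_0,k_0) = (n,k)$, suppose at stage $t$ we have a principal submatrix $M_t \in \cM_{n_t,k_t}$ with $k_t/n_t \le k/n$. If $M_t$ already contains a column basis and a row basis of sparsity at most $s(M_t)\cdot 2k_t/n_t$, we are done. Otherwise I would argue that $M_t$ must contain a row or column whose sparsity is ``too large'' relative to the average, delete that index (both its row and its column, preserving principality), and pass to $M_{t+1}$, which has $n_{t+1} = n_t - 1$. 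The key point to check is that this deletion does not increase the relative rank: removing one index drops the rank by at most $1$, so $k_{t+1} \ge k_t - 1$, and I need $k_{t+1}/n_{t+1} \le k_t/n_t$; since $k_t \le n_t$ (the rank of a $k_t\times k_t$-ish object... actually $k_t \le n_t$ always), one checks $(k_t-1)/(n_t-1) \le k_t/n_t$ holds precisely because $k_t \le n_t$. If rank does not drop we also need $k_t/(n_t-1) \le k/n$, which is where I must be a little careful — this is why one deletes a heavy row/column rather than an arbitrary one, and why the stopping condition must be chosen so that failure of balance forces a \emph{rank-decreasing} deletion, or alternatively forces enough slack in $k_t/n_t < k/n$ to absorb the increase.

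The cleaner way to organize this, and the one I would actually write up, is to consider the quantity $n_t - k_t$ (the ``co-rank'') and run the peeling on whichever of the row-side or column-side imbalance is violated. Here is the mechanism for a single side, say columns: if \emph{every} maximal linearly independent set of columns of $M_t$ had sparsity exceeding $s(M_t)\cdot 2k_t/n_t$, then in particular, ordering columns by increasing sparsity and taking a greedy maximal independent set among the lightest ones, the $k_t$-th lightest column we are forced to include is heavy; more usefully, a counting argument in the spirit of Lemma~\ref{lem_rank_vs_sparsity_diagonal} shows the lightest $n_t - k_t + 1$ columns cannot all be ``light enough,'' so some column among the heavy half can be removed. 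The honest statement I want is: either the lightest $k_t$ columns already form... no — the subtlety is that lightest columns need not be independent. So the right invariant is: \emph{there exists a column whose removal decreases the rank} (a ``pivotal'' column) \emph{and} has above-average sparsity — then deleting it gives $k_{t+1} = k_t - 1$, $n_{t+1} = n_t - 1$, and I've made progress on both rank and on total sparsity. A symmetric statement holds for rows. So the core lemma inside the induction is: \textbf{if $M_t$ has no sparse column basis (in the stated sense), then it has a pivotal column of sparsity $> s(M_t)\cdot 2k_t/n_t$; similarly for rows.}

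I would prove that core lemma as follows. Take any column basis $C$ of $M_t$ (size $k_t$). Every column in $C$ is pivotal in the sense that removing it from the full matrix drops the rank — actually that's not quite right either; what is true is that $C$ itself, as a set, is a basis, so the complement of any one element of $C$ still spans only if... hmm, removing one basis vector from the \emph{whole} matrix: the rank drops iff that column is not in the span of the others, i.e.\ iff... Let me instead say: there is at least one column $c$ such that $\rk(M_t \text{ minus column } c) = k_t - 1$ — take any $c \in C$; the remaining $n_t - 1$ columns span a space of dimension $k_t - 1$ (since $C \setminus \{c\}$ has rank $k_t-1$ and... no, other columns outside $C$ could restore it). The genuinely correct fact: the set of columns $c$ with $\rk(M_t - c) = k_t-1$ is nonempty and in fact, a column $c$ is such iff $c \notin \mathrm{span}(\text{other columns})$, which happens iff in \emph{every} column basis containing... this is getting delicate. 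The clean route: among all column bases of $M_t$, pick one, call it $C$; for $c\in C$, $c$ is removable-with-rank-drop from $M_t$ iff $c$ lies outside the span of $M_t$'s other columns. Such $c$ exists iff $M_t$ is not equal to the span of any $k_t$ columns all of which lie in a common... Actually a matrix of rank $k_t$ on $n_t > k_t$ columns always has a column outside the span of the rest? No: $n$ copies of the same vector. So pivotality can fail.

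Given these subtleties, the main obstacle — and the part I expect to require the most care — is exactly making the peeling step simultaneously control \emph{relative rank} and \emph{sparsity}; the resolution I'd pursue is to track the pair $(n_t, k_t)$ and delete a \emph{row or column whose removal does not decrease rank} when one exists and is heavy (this decreases $n_t$, keeps $k_t$, and since we only do it when $k_t/n_t$ is still $< k/n$ with room to spare... no). Honestly, I think the intended proof fixes a target and does a single clean extremal choice rather than a long iteration: consider all principal submatrices $M'$ of $M$, and among those with $k'/n' \le k/n$ pick the one \emph{minimizing $n'$} (or minimizing $k'$, or maximizing some potential). Minimality then forces the balance condition, because if $M'$ failed to have a sparse column basis, one could delete a heavy index and, using $k' \le n'$ to verify $(k'-1)/(n'-1) \le k'/n'$ or that rank stays put with enough slack, produce a strictly smaller valid submatrix — contradiction. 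The crux to nail down is the arithmetic inequality ensuring the deleted submatrix still satisfies $\le k/n$, and the combinatorial fact that imbalance yields a deletable heavy index; I'd devote the bulk of the writeup to those two points and present the rest as bookkeeping.
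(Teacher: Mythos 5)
There is a genuine gap: your argument never gets past the step you yourself flag as problematic. Every version of your peeling step hinges on finding a \emph{single} heavy index whose removal decreases the rank, so that $(k'-1)/(n'-1)\le k'/n'$ preserves the invariant; but such an index need not exist (your own observation about repeated columns), and deleting a heavy index whose removal does \emph{not} drop the rank raises the relative rank to $k'/(n'-1)>k'/n'$, which can push it above $k/n$ --- exactly the failure you note and never resolve. The closing ``minimal $n'$'' reformulation inherits the same hole: minimality only tells you that no smaller principal submatrix has relative rank at most $k/n$; to derive a contradiction from the absence of sparse bases you still need a deletion (or some other move) that removes weight while respecting the rank constraint, and that is precisely the ``combinatorial fact'' you explicitly defer. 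So the proposal is an exploration of an approach that, in its one-index-at-a-time form, does not close.

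The paper's proof of Lemma~\ref{lem_k'_n'_n_k} supplies the missing mechanism, and it is not a single deletion. Order the indices by the combined weight $s(i)$ (number of nonzeros in row $i$ plus column $i$) and consider \emph{all} light-prefix principal submatrices at once. Either some proper prefix has relative rank at most $k/n$ --- then recurse into it by induction (this plays the role of your minimality step, and here a rank-drop argument is unnecessary) --- or every prefix has relative rank strictly greater than $k/n$. In the latter case, take the column (row) basis chosen \emph{greedily in order of increasing weight}; writing $a_i$ (resp.\ $k_i$) for the number of indices (resp.\ selected basis vectors) of weight exactly $i$, the prefix assumption gives $\sum_{i\le t}k_i \ge (k/n)\sum_{i\le t}a_i$ for every threshold $t$, while the totals satisfy $\sum_i k_i = (k/n)\sum_i a_i$; subtracting yields suffix inequalities, and summing them over all $t$ (Abel summation) bounds the basis weight by $\sum_i i\cdot k_i \le (k/n)\sum_i i\cdot a_i = 2s(M)k/n$. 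Thus imbalance is handled by a counting argument over all weight thresholds simultaneously rather than by exhibiting a ``pivotal'' heavy column, which is why the obstruction you ran into (heavy columns lying in the span of the rest) never arises. If you want to keep your extremal framing, it can be made to work, but only by importing exactly this sorted-greedy-plus-threshold-counting argument for the minimal submatrix; the single-heavy-index step should be abandoned.
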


Note that if $M$ contains a zero entry on the main diagonal, the lemma becomes trivial. Indeed, we can take $M'$ to be a $1\times 1$ principal submatrix formed by this zero entry. Thus, the lemma is only interesting for matrices $M$ without zero elements on the main diagonal (i.e., when every principal submatrix has rank greater than $0$).

\begin{proof}
We prove the statement of the lemma by induction on $n$. 
The base case $n=1$ holds trivially.

Now let $n>1$, and assume that the statement of the lemma is proven for every $m\times m$ matrix for $1\leq m < n$.
Let $s(i)$ be the number of nonzero entries in the $i$th column plus the number of non-zero entries in the $i$th row (note that a nonzero entry on the diagonal is counted twice).
Let also $s_{\max}=\max_{i}{s(i)}$. 
By applying the same permutation to the columns and rows of $M$ we can assume that $s(1)\leq s(2)\leq \cdots \leq s(n)$ holds. 

If for some $1\leq n'<n$, the leading principal submatrix $M'$ of dimensions $n'\times n'$ has rank at most $k'\leq n'k/n$, then we use the induction hypothesis for $M'$. This gives us a principal submatrix $M''$ of dimensions $n''\times n''$ and rank $k''$, such that $M''$ contains a column basis and a row basis of sparsity at most $s(M'')\cdot\frac{2k''}{n''}$. Also, by induction hypothesis $k''/n''\leq k'/n'\leq k/n$, which proves the lemma statement in this case.

Now we assume that for all $n'<n$, the rank of the leading principal submatrix of dimension $n'\times n'$ is greater than $n'k/n$. We prove that the lemma statement holds for $M'=M$ for a column basis, and an analogous proof gives the same result for a row basis.

For every $0\leq i\leq s_{\max}$, let $a_i=|\{j \,:\, s(j)=i\}|$.
Note that 
\begin{equation}\label{eq:suma}
\sum_{i=0}^{s_{\max}} a_i=n \; .
\end{equation}
Let us select a column basis of cardinality $k$ by greedily adding linearly independent vectors to the basis in non-decreasing order of $s(i)$. 
Let $k_i$ be the number of selected vectors $j$ with $s(j)=i$. 
Then 
\begin{equation}\label{eq:sumk}
\sum_{i=0}^{s_{\max}} k_i=k.
\end{equation}

Next, for any $0 \leq t<s_{\max}$,
consider the leading principal submatrix given by indices $i$ with $s(i)\leq t$. 
The rank of this matrix is at most $k'=\sum_{i=0}^t k_i$, and its dimensions are $n'\times n'$, where $n'=\sum_{i=0}^t a_i < n$. Thus by our assumption $k'/n'\geq k/n$,
or equivalently, 
\begin{equation}\label{eq:ineq_on_t}
\sum_{i=0}^tk_i\geq\frac{k}{n}\cdot \sum_{i=0}^ta_i \; .
\end{equation}

From~\eqref{eq:suma} and \eqref{eq:sumk}, 
\begin{equation}\label{eq:totalsum}
\sum_{i=0}^{s_{\max}}{k_i}=\frac{k}{n}\cdot\sum_{i=0}^{s_{\max}}{a_i} \; .
\end{equation}
Now, \eqref{eq:ineq_on_t} and~\eqref{eq:totalsum} imply that for all $0\leq t\leq s_{\max}$:\begin{equation}\label{eq:suffix}
\sum_{i=t}^{s_{\max}}{k_i} \leq \frac{k}{n}\cdot \sum_{i=t}^{s_{\max}}{a_i} \; .
\end{equation}
To finish the proof, notice that the sparsity of the constructed basis of $M$ 
is at most 
\begin{align*}
\sum_{i=1}^{s_{\max}}i\cdot k_i
=\sum_{t=1}^{s_{\max}}\sum_{i=t}^{s_{\max}}k_i
\stackrel{\eqref{eq:suffix}}{\leq}
\frac{k}{n}\cdot\sum_{t=1}^{s_{\max}}\sum_{i=t}^{s_{\max}}a_i
=\frac{k}{n}\cdot\sum_{i=1}^{s_{\max}}i\cdot a_i=s(M)\cdot \frac{2k}{n} \; .
\end{align*}
\end{proof}

Now we are ready to prove our main result  -- a lower bound on the minrank of a random graph.

\begin{theorem} \label{thm_mrk_gnp} 
\[
\Pr\left[\, \mrk(\cG_{n,p}) \geq \Omega\left(\frac{n\log(1/p)}{\log{\left(n|\F|/p\right)}}\right) \, \right ] \; \geq \; 1-e^{-\Omega\left(\frac{n\log^2{(1/p)}}{\log^2{\left(n|\F|/p\right)}}\right)} \; .
\]
\end{theorem}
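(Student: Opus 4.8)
The plan is to combine the three lemmas from this section via a union bound, following the strategy sketched in the proof overview. Write $k$ for the target lower bound on the minrank, to be fixed at the end as $k = c\,n\log(1/p)/\log(n|\F|/p)$ for a suitable small constant $c>0$. First I would reduce, by Lemma~\ref{thm:azuma} (the vertex-exposure Azuma inequality), the task to showing that $\E[\mrk(\cG_{n,p})]$ is at least (roughly) $2k$ and that the deviation term $\lambda\sqrt{n-1}$ at the claimed confidence level is at most $k$; note $\mrk$ changes by at most $1$ when one vertex's out-arcs are altered, since one can modify a single row of a representing matrix and increase the rank by at most one. So it suffices to bound $\Pr[\mrk(\cG_{n,p}) < 2k]$ away from $1$ — in fact I will show it is exponentially small, which is stronger than needed for the expectation bound but is anyway required for the final concentration statement (one feeds $\lambda \asymp \sqrt{n}\log(1/p)/\log(n|\F|/p)$ into Azuma).

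\textbf{The union bound.} Suppose $\mrk(\cG_{n,p}) = r < 2k$, witnessed by a representing matrix $M$ of rank $r$ with nonzero diagonal. By Lemma~\ref{lem_k'_n'_n_k}, $M$ contains a principal submatrix $M' \in \cM_{n',k'}$ with $k'/n' \le r/n < 2k/n$ that has a column basis and a row basis each of sparsity at most $s(M')\cdot 2k'/n'$. Since $M'$ is a principal submatrix of a representing matrix, $M'$ itself has nonzero diagonal, so Lemma~\ref{lem_rank_vs_sparsity_diagonal} gives $s(M') \ge (n')^2/(4k')$; combined with $k'/n' < 2k/n$ this means the basis sparsity bound $s' := s(M')\cdot 2k'/n'$ satisfies $s' \ge n'/2$, and more usefully, writing everything in terms of $s := s(M')$, the bases have sparsity at most $s\cdot 2k'/n' \le s \cdot 4k/n$. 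Now, the event ``$\cG_{n,p}$ has such an $M'$'' is contained in the union, over all choices of the vertex subset $S$ (with $|S|=n'$) and all matrices $M' \in \cM_{n',k',s'}$ on $S$ with nonzero diagonal, of the event ``$M'$ represents $\cG_{n,p}[S]$''. For a fixed $M'$ with sparsity $s$, this probability is at most $p^{\,s - n'}$: each of the $s - n'$ off-diagonal nonzero entries of $M'$ forces a corresponding arc, independently present with probability $p$ (the diagonal entries are free). Using Corollary~\ref{cor_matrix_encoding}, the number of candidate $M'$ on a fixed $S$ with the required sparse bases is at most $(n'|\F|)^{6s'} \le (n|\F|)^{24ks/n}$, and there are at most $2^n \le (n|\F|)^{s/n \cdot O(1)}$... more carefully, at most $\binom{n}{n'} \le 2^n$ choices of $S$.

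\textbf{Putting it together.} Summing over $n' \le n$, $k' \le 2k$, and the relevant range of $s$ (from $(n')^2/(4k') \ge n'/8$ up to $(n')^2$), the total failure probability is at most
\[
\sum_{n',k',s} 2^{n}\cdot (n|\F|)^{O(ks/n)}\cdot p^{\,s-n} \;=\; \sum_{n',k',s} 2^{n}\,p^{-n}\cdot \exp\!\left( O\!\left(\tfrac{ks}{n}\right)\log(n|\F|) \;-\; s\log(1/p)\right).
\]
The dominant constraint comes from the exponent of $s$: we need $O(k/n)\log(n|\F|) \le \tfrac12 \log(1/p)$, i.e. $k \le c\,n\log(1/p)/\log(n|\F|)$. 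Since we may assume $p \le 1/2$ (so $\log(1/p) \ge \log 2$ and $\log(n|\F|) \asymp \log(n|\F|/p)$ up to constants), this is exactly $k = \Theta\!\big(n\log(1/p)/\log(n|\F|/p)\big)$. With $k$ so chosen, each term is at most $\exp(-\Omega(s\log(1/p)))$, and since $s \ge n'/8$ and the surviving terms have $n'$ not too small (if $n'$ is tiny the submatrix has full rank $k'=n'$, violating $k'/n' \le 2k/n < 1$, forcing $n' \ge n/(2k) \cdot$something — in any case one checks $s = \Omega(n\log(1/p)/\log(n|\F|/p))$ in all surviving terms because $s \ge (n')^2/(4k')$ and the $k'/n' \le 2k/n$ bound forces $n'$ large enough), the polynomially-many terms ($n'\!\cdot\! k'\!\cdot\!s = \mathrm{poly}(n)$ of them) sum to $e^{-\Omega(n\log^2(1/p)/\log^2(n|\F|/p))}$, matching the theorem. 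Feeding this tail bound into Azuma with the matching $\lambda$ upgrades it to the high-probability lower bound on $\mrk(\cG_{n,p})$ itself.

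\textbf{Main obstacle.} The routine parts are the two union-bound counting steps and the arc-probability estimate $p^{s-n'}$. The step requiring the most care is verifying that every \emph{surviving} term in the sum — i.e. every triple $(n',k',s)$ for which the constraints $k'/n' \le 2k/n$, $k' \le n'$, and $s \ge (n')^2/(4k')$ are simultaneously satisfiable with $n' \le n$ — actually has $s$ large enough, namely $s = \Omega\big(n\log(1/p)/\log(n|\F|/p)\big)$, so that $e^{-\Omega(s\log(1/p))}$ really is as small as the claimed bound and not merely $o(1)$. This amounts to checking that $s \ge (n')^2/(4k') \ge n'/(8) \cdot (n'/(2k'))$ and that $k'/n' \le 2k/n$ forces $(n')^2/(4k') \ge n \cdot n'/(8k) \cdot (k'/n') / (k'/n')$... concretely $s \ge (n')^2/(4k') = (n'/k')\cdot n'/4 \ge (n/(2k))\cdot n'/4$, and then bounding $n'$ from below (the smallest interesting $n'$ is $\Theta(k)$, giving $s = \Omega(n)$, which is more than enough). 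I would also double-check the edge case where $M'$ has a zero on its diagonal — but as the remark after Lemma~\ref{lem_k'_n'_n_k} notes, a representing matrix has an all-nonzero diagonal, and principal submatrices inherit this, so this case does not arise.
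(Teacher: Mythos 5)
Your overall architecture mirrors the paper's (Lemma~\ref{lem_k'_n'_n_k} to pass to a principal submatrix with sparse bases, Lemma~\ref{lem_rank_vs_sparsity_diagonal} for its density, Corollary~\ref{cor_matrix_encoding} for counting, a union bound, then Azuma), but the quantitative heart of your union bound --- precisely the step you flag as the main obstacle --- is wrong. The constraints $k'\ge 1$ and $k'/n'\le 2k/n$ force only $n'\ge n/(2k)$, not $n'=\Theta(k)$; the smallest sparsity compatible with $s\ge (n')^2/(4k')$ is therefore about $n^2/(16k^2)=\Theta\bigl(\log^2(n|\F|/p)/\log^2(1/p)\bigr)$, which for constant $p$ is only $\Theta(\log^2 n)$, not $\Omega(n)$. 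Such terms are realized by small dense principal submatrices (e.g.\ an all-ones block on about $n/k=\Theta(\log n)$ vertices, i.e.\ a clique), and the event that $\cG_{n,p}$ contains some such submatrix has probability that is at best quasi-polynomially small; hence no union bound over submatrices can deliver the tail $e^{-\Omega(n\log^2(1/p)/\log^2(n|\F|/p))}$ you claim, and indeed the paper extracts only a $2^{-\Omega(\log n)}$ bound from this stage. Worse, with your crude prefactor $\binom{n}{n'}\cdot p^{-n'}\le 2^{n}p^{-n}$ the small-$s$ terms are not absorbed at all (you would need $s\log(1/p)$ to dominate $n\log(2/p)$, which fails when $s=\Theta(\log^2 n)$), so as written your computation does not even establish that the failure probability is $o(1)$.

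The repair is the paper's bookkeeping rather than a new idea: keep the subset count as $\binom{n}{n'}\le n^{n'}=2^{n'\log n}$ and the diagonal allowance as $p^{-n'}$ (not $p^{-n}$), and absorb the resulting $n'\log(n/p)$ term using Lemma~\ref{lem_rank_vs_sparsity_diagonal} in the form $n'\le 4s'k'/n'$ together with $k'/n'\le k/n$, so that every positive contribution to the exponent is at most $(16/C)\,s'\log(1/p)$; choosing $C>16$ makes each term $2^{-\Omega(s'\log(1/p))}$ and the whole sum at most $2^{-\Omega(\log n)}$. That polynomially small bound is all you need before concentration: it gives $\E[\mrk(\cG_{n,p})]=\Omega(k)$, and then Azuma with $\lambda=\Theta\bigl(\sqrt{n}\,\log(1/p)/\log(n|\F|/p)\bigr)$ --- which you set up correctly, since altering one vertex's arcs changes minrank by at most $1$ --- supplies the exponential tail in the theorem. (Accordingly, what you feed into Azuma is the expectation bound, not the union-bound tail; the exponential decay comes from Azuma alone.)
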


\begin{proof}
Let us bound from above probability that a random graph $\cG_{n,p}$ has minrank at most 
$$k :=  \frac{n\log(1/p)}{C\log{\left(n|\F|/p\right)}},$$ for some constant $C$ to be chosen below.

Recall that by Lemma~\ref{lem_k'_n'_n_k}, every matrix of rank at most $k$ contains a principal submatrix $M'\in\cM_{n',k'}$ of sparsity 
$s' = s(M')$ with column and row bases of sparsity at most 
$$s'\cdot\frac{2k}{n},$$ where $k'/n'\leq k/n$. %
By Corollary~\ref{cor_matrix_encoding}, there are at most $(n'\cdot|\F|)^{6(2s'k/n)}$ such matrices $M'$, 
and (for any $s'$) there are $\binom{n}{n'}$ ways to choose a 
principal submatrix of size $n'$ in a matrix of size $n\times n$. 
Furthermore, recall that Lemma~\ref{lem_rank_vs_sparsity_diagonal} asserts that for every $n',k'$, 
\begin{equation}\label{eq_density_M_prime}
s'\geq \frac{n'^2}{4k'}.
\end{equation}
Finally, since $M'$ contains at least $s'-n'$ off-diagonal non-zero entries, $\cG_{n,p}$ contains it with probability at most 
$p^{s'-n'}$. We therefore have 

\begin{align}
\Pr\left[\mrk(\cG_{n,p}) \leq k\right]\nonumber\\
&\leq\sum_{k',n',s'}\Pr\left[\cG_{n,p} \text{ contains $M'\in\cM_{n',k'},$ } s(M')=s', s(\text {bases of }M') 
 \leq s'\cdot\frac{2k}{n}\right] \nonumber\\
&\leq\sum_{k',n',s'}  \binom{n}{n'} \cdot p^{s'-n'}  \cdot \left(n'\cdot|\F| \right)^{12s'k/n}\nonumber\\
&\leq\sum_{k',n',s'} 2^{n'\log{n}-s'\log(1/p)+n'\log(1/p)+(12s'k/n)\log{(n'|\F|)}} \label{mainbound}
\; ,
\end{align}
where all the summations are taken over $n', k'$, s.t. $k'/n' \leq  k/n$ and $s'\geq \frac{n'^2}{4k'}$, and the first inequality 
is again by Lemma~\ref{lem_k'_n'_n_k}. 
We now argue that for sufficiently large constant $C$, all positive terms in the exponent of \eqref{mainbound} are 
dominated by the magnitude of the negative term ($s'\log(1/p)$). Indeed: 
\begin{align*}
n'\log{n}+n'\log(1/p)+(12s'k/n)\log{(n'|\F|)}
&=n'\log{(n/p)}+(12s'k/n)\log{(n'|\F|)}\\
\leq(4s'k'/n')\log{(n/p)}+(12s'k/n)\log{(n|\F|)}
&\leq(16s'k/n)\log{(n|\F|/p)}
=(16s'/C)\log{(1/p)}
\; ,
\end{align*}
where the first inequality follows from~\eqref{eq_density_M_prime}, and the second one follows from $k'/n' \leq k/n$.

Thus, for $C>16$, 
\begin{align*}
\Pr\left[\mrk(\cG_{n,p}) \leq \frac{n\log(1/p)}{C\log{\left(n|\F|/p\right)}}\right] \leq
n^4\cdot 2^{-\Omega(s'\log(1/p))}\leq 
2^{-\Omega(\log(n))}.
\end{align*}
In particular, $\E\left[\mrk(\cG_{n,p})\right] \geq \frac{n\log(1/p)}{2C\log{\left(n|\F|/p\right)}}$. Furthermore, 
note that changing a single row (or column) of a matrix can change its minrank by at most $1$, hence 
the minrank of two graphs that differ in one vertex differs by at most $1$. We may thus 
 apply Lemma~\ref{thm:azuma} with 
$\lambda=\Theta\left(\frac{\sqrt{n}\log(1/p)}{\log{\left(n|\F|/p\right)}} \right)$ to obtain 
\[
\Pr\left[\, \mrk(\cG_{n,p}) \geq \Omega\left(\frac{n\log(1/p)}{\log{\left(n|\F|/p\right)}}\right) \, \right ] \; \geq \; 1-e^{-\Omega\left(\frac{n\log^2{(1/p)}}{\log^2{\left(n|\F|/p\right)}}\right)} \; .
\]
as desired. 
\end{proof}

\begin{corollary}\label{cor_main_thm_const_p}
For a  constant $0<p<1$ and a field $\F$ of size $|\F|<n^{O(1)}$,
\[
\Pr\left[\, \mrk(\cG_{n,p}) \geq \Omega(n/\log{n}) \, \right ] \; \geq \; 1-e^{-\Omega\left(n/\log^2{n}\right)} \; .
\]
\end{corollary}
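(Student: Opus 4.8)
The plan is to derive the corollary as a direct specialization of Theorem~\ref{thm_mrk_gnp}, tracking how each quantity in that bound simplifies when $p$ is a fixed constant and $|\F|$ is polynomially bounded in $n$. First I would observe that for constant $0<p<1$ we have $\log(1/p)=\Theta(1)$, where the hidden constants depend only on $p$. Next, writing $|\F|<n^c$ for some constant $c$ (this is what $|\F|<n^{O(1)}$ means), I would note that $\log(n|\F|/p)=\log n+\log|\F|+\log(1/p)\leq (c+1)\log n+O(1)=\Theta(\log n)$, and also $\log(n|\F|/p)\geq \log n$, so $\log(n|\F|/p)=\Theta(\log n)$.

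Substituting these two estimates into the rank bound of Theorem~\ref{thm_mrk_gnp} gives
\[
\frac{n\log(1/p)}{\log(n|\F|/p)}=\frac{n\cdot\Theta(1)}{\Theta(\log n)}=\Theta\!\left(\frac{n}{\log n}\right),
\]
so the event $\mrk(\cG_{n,p})\geq\Omega(n/\log n)$ is exactly (up to the value of the absorbed constant) the event whose probability Theorem~\ref{thm_mrk_gnp} lower bounds. Doing the same substitution in the failure-probability exponent gives
\[
\frac{n\log^2(1/p)}{\log^2(n|\F|/p)}=\frac{n\cdot\Theta(1)}{\Theta(\log^2 n)}=\Theta\!\left(\frac{n}{\log^2 n}\right),
\]
so the probability bound $1-e^{-\Omega(n/\log^2 n)}$ follows immediately. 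Chaining these together yields exactly the claimed inequality, and the proof is complete in a couple of lines.

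There is essentially no obstacle here: the only thing to be careful about is that the constants hidden in the two $\Omega(\cdot)$ expressions now depend on $p$ and on the exponent $c$ in $|\F|<n^c$, but since both are constants this is harmless for an asymptotic statement, and I would simply remark on this once rather than carry the dependence explicitly. If one wanted to be fully pedantic, one could instead state the corollary with the dependence on $p$ made explicit (as $\Omega_p(\cdot)$), but given that the theorem already does the heavy lifting, a one-paragraph specialization argument suffices.
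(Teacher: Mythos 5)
Your specialization of Theorem~\ref{thm_mrk_gnp} is correct and is exactly how the paper obtains the corollary: for constant $p$ one has $\log(1/p)=\Theta(1)$ and for $|\F|<n^{O(1)}$ one has $\log(n|\F|/p)=\Theta(\log n)$, so both the rank bound and the exponent in the failure probability simplify as you state. Nothing further is needed.
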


\subsection{Tightness of Theorem~\ref{thm_mrk_gnp}}

In this section, we show that Theorem~\ref{thm_mrk_gnp} provides a tight bound for all values of $p$ bounded away from~$1$ 
(i.e., $p \leq 1-\Omega(1)$). (See also the end of the section for the regime of $p$ close to $1$.)
\begin{theorem}\label{thm:tight}
For any $p$ bounded away from~$1$, 
\[
\Pr\left[\, \mrk(\cG_{n,p}) = O\left(\frac{n\log(1/p)}{{\log{n}+\log(1/p)}}\right) \, \right ] \; \geq \; 1- e^{-\Omega\left(n\right)}\; .
\]
\end{theorem}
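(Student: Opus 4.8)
\textbf{Reduction to a chromatic number.}
The plan is to upper bound $\mrk(\cG_{n,p})$ by the ``clique covering'' (rank-one-block) construction and to analyze that construction as a chromatic number problem on a \emph{dense} random graph. If the vertex set of a directed graph $G$ is partitioned into parts $V_1,\dots,V_t$ inside each of which every arc (in both directions) is present, then the block matrix with an all-ones block on each $V_i\times V_i$ and zeros elsewhere represents $G$ and has rank $t$ (the all-ones matrix has rank one over every field and has nonzero diagonal). Thus $\mrk(G)$ is at most the number of parts in any such partition. For $\cG_{n,p}$, let $H$ be the undirected graph on $[n]$ with an edge $\{u,v\}$ exactly when both arcs $(u,v)$ and $(v,u)$ are present; then $H$ has the distribution of the undirected random graph with edge probability $p^2$, a partition of $[n]$ into parts as above is exactly a partition into cliques of $H$, and so
\[
  \mrk(\cG_{n,p}) \;\le\; \chi(\overline{H}),
\]
where $\overline{H}$ is distributed as $\cG_{n,q}$ with $q:=1-p^2$. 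Since $p$ is bounded away from $1$, $q$ is bounded away from $0$ and $\log(1/(1-q))=2\log(1/p)=\Omega(1)$; hence it suffices to show that, with probability $1-e^{-\Omega(n)}$,
\[
  \chi(\cG_{n,q}) \;=\; O\!\left(\frac{n\log(1/(1-q))}{\log n+\log(1/(1-q))}\right),
\]
which, on substituting $1-q=p^2$, is the claimed bound up to a constant factor.

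\textbf{Bounding $\chi(\cG_{n,q})$ by greedy coloring.}
For this I would run the standard argument for the chromatic number of dense random graphs. Put $r:=1-q=p^2$, $m:=\lceil n/\log^2 n\rceil$, and $\ell:=\lfloor \log_{1/r} m\rfloor$, which up to a factor two is the independence number of $\cG_{m,q}$. If $\ell$ is below some absolute constant $L$ then $\log(1/r)\ge (\log m)/L=\Omega(\log n)$, so the right-hand side above is $\Omega(n)$ and the trivial bound $\chi(\cG_{n,q})\le n$ suffices; so assume $\ell\ge L$, equivalently $r\ge m^{-1/L}$. The heart of the argument is the uniform claim that, with probability $1-e^{-\Omega(n)}$, \emph{every} set $S$ of $m$ vertices induces in $\cG_{n,q}$ an independent set of size $\ell$ (equivalently, $\cG_{n,r}[S]$ contains an $\ell$-clique). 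Granting it, greedily deleting independent sets of size $\ell$ until fewer than $m$ vertices remain, and then using one fresh color per leftover vertex, gives $\chi(\cG_{n,q})\le n/\ell+m+1$; since $\ell=\Theta(\log n/\log(1/r))$ in this regime, $n/\ell=O(n\log(1/r)/\log n)$, and since $n\log(1/r)/(\log n+\log(1/r))=\Omega(n/\log n)$ while $m=o(n/\log n)$, the total is $O(n\log(1/r)/(\log n+\log(1/r)))$, as needed.

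\textbf{The uniform claim, and the main obstacle.}
Fix $S$ with $|S|=m$ and let $X$ count the $\ell$-cliques of $\cG_{n,r}[S]$, so $\mu:=\E[X]=\binom{m}{\ell}r^{\binom{\ell}{2}}$. For $\ell=\lfloor\log_{1/r}m\rfloor$ one has $\mu=m^{\Theta(\ell)}\ge n^{\Omega(1)}$, and the Janson dependency sum
\[
  \Delta \;=\; \sum_{j=2}^{\ell-1}\binom{m}{\ell}\binom{\ell}{j}\binom{m-\ell}{\ell-j}\,r^{\,2\binom{\ell}{2}-\binom{j}{2}}
\]
is dominated by its $j=2$ term, giving $\Delta=\Theta(\ell^4\mu^2/(m^2 r))$ and therefore
\[
  \frac{\mu^2}{\Delta} \;=\; \Theta\!\left(\frac{m^2 r}{\ell^4}\right) \;\ge\; m^{\,2-1/L-o(1)} \;=\; n^{\,2-1/L-o(1)},
\]
using $r\ge m^{-1/L}$ and $\ell=O(\log n)$. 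Janson's inequality then yields $\Pr[X=0]\le e^{-\Omega(\min(\mu,\,\mu^2/\Delta))}\le e^{-n^{1+\Omega(1)}}$, so a union bound over the at most $2^n$ choices of $S$ still leaves failure probability $e^{-n^{1+\Omega(1)}}=e^{-\Omega(n)}$. The two points that require care — and where I expect the bulk of the work — are verifying that the $j=2$ term really dominates $\Delta$ over the relevant range of $\ell$ (a standard estimate, using that the summand is log-convex in $j$, hence maximized at $j\in\{2,\ell-1\}$), and keeping the bookkeeping straight across the regimes $p=n^{-o(1)}$ (where $\ell\to\infty$, the interesting case) and $p=n^{-\Omega(1)}$ (where the target is already $\Omega(n)$ and the trivial bound wins). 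Finally, replacing $\ell=\log_{1/r}m$ by a value just below the clique threshold of $\cG_{m,r}$, namely $\ell=\lfloor(1-\eps)\cdot 2\log_{1/r}m\rfloor$, leaves the estimate $\mu^2/\Delta=\Theta(m^2r/\ell^4)=m^{2-o(1)}$ unchanged and sharpens the constant in the bound to $1+o(1)$, matching the lower bound of Theorem~\ref{thm_mrk_gnp}; since the union bound already delivers failure probability $e^{-\Omega(n)}$, no martingale concentration is needed.
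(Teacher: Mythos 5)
Your proposal is correct, and its first step is exactly the paper's: bound $\mrk(\cG_{n,p})$ by the partition of the vertices into bidirected cliques, i.e., by $\chi$ of the complement, which reduces the problem to the chromatic number of an undirected random graph with edge probability $1-p^2$. Where you diverge is in how that chromatic number is bounded. The paper treats this step as a black box: it invokes Bollob\'as's theorem (and its extension by Pudl\'ak, R\"odl and Sgall to edge probabilities down to $n^{-1/4}$, after reducing to $p>n^{-1/8}$ by the same ``the target bound is $\Omega(n)$ so the trivial bound wins'' observation you make via your constant $L$), which yields the sharp constant $\frac{n\log(1/p)}{2\log n}(1+o(1))$ together with the $1-e^{-\Omega(n)}$ concentration. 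You instead re-derive a constant-factor version of that theorem from scratch: the uniform statement that every $m$-subset ($m\approx n/\log^2 n$) contains an independent set of size $\ell\approx\log_{1/p^2}m$, proved by the extended Janson inequality with $\min(\mu,\mu^2/\Delta)\ge n^{1+\Omega(1)}$ and a union bound over the at most $2^n$ subsets, followed by greedy extraction of independent sets. This is the standard proof of the cited results, so there is no gap in substance; the points you flag (domination of the $j=2$ term in $\Delta$, and making sure $\mu$ itself exceeds $n^{1+\Omega(1)}$, which forces your threshold constant $L$ to be at least $3$ or so) are routine but do need to be written out. The trade-off: your route is self-contained, handles all $p$ bounded away from $1$ uniformly, and delivers the $e^{-\Omega(n)}$ probability directly from the union bound, while the paper's citation-based route is shorter and gives the optimal leading constant, which is what makes the upper and lower bounds match up to the constant in front.
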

\begin{proof}
We can assume that $p > n^{-1/8}$ as otherwise the statement is trivial.

As we saw in the introduction, in the case of a clique (a graph with an arc between every pair of distinct vertices) it is enough to broadcast only one bit. 
This simple observation leads to the ``clique-covering'' upper bound: If a directed graph $G$ can be covered by $m$ cliques, then $\mrk(G)\leq m$~\cite{haemers1978upper, BBJK06, HL12}. Note that the minimal number of cliques needed to cover $G$ is exactly $\chi(\bar{G})$. Thus, we have the following upper bound: For any field $\F$ and any directed graph $G$, 
\begin{align}\label{thm:upperbound}
\mrk(G)\leq \chi(\bar{G}) \; .
\end{align}
Since the complement of $\cG_{n,p}$ is $\cG_{n,1-p}$, it follows from~\eqref{thm:upperbound}
that an upper bound on $\chi(\cG_{n,1-p})$ implies an upper bound on $\mrk(\cG_{n,p})$.

Let $\cG^-_{n,p}$ denote a random Erd{\H o}s-R{\'e}nyi %
\emph{undirected} graph on $n$ vertices, where each edge is drawn independently with probability $p$.
For constant $p$, the classical result of Bollob{\'a}s~\cite{bollobas1988chromatic} asserts that the 
chromatic number of an undirected random graph satisfies 
\begin{equation}\label{eq:chiconst}
\Pr\left[\chi(\cG^-_{n,1-p})\leq \frac{n\log{(1/p)}}{2\log{n}}\left(1+o(1)\right)\right] > 1-e^{-\Omega(n)}  \; .
\end{equation}
In fact, Pudl{\'a}k, R{\"o}dl, and Sgall~\cite{pudlak1997boolean} 
showed that~\eqref{eq:chiconst} holds for any $p>n^{-1/4}$.

Since we define the chromatic number of a directed graph to be the chromatic number of its undirected counterpart, $\chi(\cG_{n,1-p})=\chi(\cG^-_{n,1-p^2})$.
The bound~\eqref{eq:chiconst} depends on $p$ only logarithmically ($\log{(1/p)}$), thus, asymptotically the same bounds hold for the chromatic number of a random directed graph. 
\end{proof}

The lower bound of Theorem~\ref{thm_mrk_gnp} is also almost tight for the other extreme regime of $p=1-\eps$, where $\eps=o(1)$. {\L}uczak~\cite{luczak1991chromatic} proved that for $p=1-\Omega(1/n)$, 
\begin{equation}\label{eq:chilarge}
\Pr\left[\chi(\cG^-_{n,1-p})\leq \frac{n(1-p)}{2\log{n(1-p)}}\left(1+o(1)\right)\right] > 1- \left(n(1-p)\right)^{-\Omega(1)}  \; .
\end{equation}
When $p=1-\eps$, the upper bound~\eqref{eq:chilarge} matches the lower bound of Theorem~\ref{thm_mrk_gnp} for $\eps\geq n^{-1+\Omega(1)}$. For $\eps=O(n^{-1})$,~\eqref{eq:chilarge} gives an asymptotically tight upper bound of $O(1)$. Thus, we only have a gap between the lower bound of Theorem~\ref{thm_mrk_gnp} and  known upper bounds when $p=1-\eps$ and $\omega(1) \leq n\eps \leq n^{o(1)}$. 

\section*{Acknowledgements}
We would like to thank Ishay Haviv for his valuable comments on an earlier version of this work.

\bibliographystyle{alpha}
\bibliography{refs}

\end{document}